\theoremstyle{plain}
\newtheorem{theorem}{Theorem}
\newtheorem{proposition}[theorem]{Proposition}
\newtheorem{lemma}[theorem]{Lemma}
\newtheorem{corollary}[theorem]{Corollary}
\theoremstyle{definition}
\newtheorem{example}[theorem]{Example}
\newcommand{\lfam}{\mathscr{L}}
\newcommand{\rtf}{\lfam_{\subtext{rt}}}
\newcommand{\reg}{\textrm{REG}}
\newcommand{\dcfl}{\textrm{DCFL}}
\newcommand{\dcsl}{\textrm{DCSL}}
\newcommand{\twsda}{\textsf{twsDA}}
\newcommand{\twsdnea}{\textsf{twsDNEA}}
\newcommand{\twsdca}{\textsf{twsDCA}}
\newcommand{\dsa}{\textsf{DSA}}
\newcommand{\dnesa}{\textsf{DNESA}}
\newcommand{\dcsa}{\textsf{DCSA}}
\newcommand{\type}{\textsf{TYPE}}
\newcommand{\direct}{\textsf{DIRECT}}
\DeclareMathOperator{\pdpop}{\texttt{pop}}
\DeclareMathOperator{\pdpush}{\texttt{push}}
\newcommand{\rightend}{\mathord{\vartriangleleft}}
\newcommand{\dollar}{\texttt{\$}}
\newcommand{\cent}{\texttt{\textcent}}
\newcommand{\border}{\texttt{\#}}
\newcommand{\eoe}{\hspace*{\fill} $\blacksquare$\smallskip}
\newcommand{\subtext}[1]{\textnormal{\scriptsize #1}}
\newcommand{\cyes}{\textcolor{PineGreen}{\ding{51}}}%
\newcommand{\cno}{\textcolor{red}{\ding{55}}}%
\title{Deterministic Real-Time Tree-Walking-Storage Automata}
\author{Martin Kutrib
\institute{%
  Institut f\"ur Informatik, Universit\"at Giessen\\
  Arndtstr.~2, 35392 Giessen, Germany}
\email{kutrib@informatik.uni-giessen.de}
\and
Uwe Meyer
\institute{%
    Technische Hochschule Mittelhessen\\
    Wiesenstr.~14, 35390 Giessen, Germany}
 \email{uwe.meyer@mni.thm.de}
}
\begin{document}

\maketitle

\begin{abstract}
We study deterministic tree-walking-storage automata, which
are finite-state devices equipped with a tree-like storage. 
These automata are generalized stack automata, where the linear 
stack storage is replaced by a non-linear tree-like stack.
Therefore, tree-walking-storage automata have the
ability to explore the interior of the tree storage 
without altering the contents, with the possible moves of the tree pointer
corresponding to those of tree-walking automata.
In addition, a tree-walking-storage automaton can append (push) non-existent
descendants to a tree node and remove (pop) leaves from the tree.
Here we are particularly considering the capacities of deterministic 
tree-walking-storage automata working in real time. It is shown that
even the non-erasing variant can accept rather complicated unary languages
as, for example, the language of words whose lengths are powers of two, or
the language of words whose lengths are Fibonacci numbers. Comparing the
computational capacities with automata from the classical automata hierarchy,
we derive that the families of languages accepted by real-time
deterministic (non-erasing) tree-walking-storage automata is located between
the regular and the deterministic context-sensitive languages. There is
a context-free language that is not accepted by any real-time
deterministic tree-walking-storage automaton. On the other hand,
these devices accept a unary language in non-erasing mode that cannot be
accepted by any classical stack automaton, even in erasing mode and arbitrary time.
Basic closure properties of the induced families of languages are
shown. In particular, we consider 
Boolean operations (complementation, union, intersection) and AFL operations 
(union, intersection with regular languages, homomorphism, inverse homomorphism, 
concatenation, iteration). It turns out that the two families in question have 
the same properties and, in particular, share all but one of these closure 
properties with the important family of deterministic context-free languages.
%
\end{abstract}

\section{Introduction}

Stack automata were introduced in \cite{Ginsburg:1967:SAC} as a
theoretical model motivated by compiler theory, and the implementation of
recursive procedures with parameters. Their computational power lies 
between that of pushdown automata and Turing machines. 
Basically, a stack automaton is a finite-state device equipped with a
generalization of a pushdown store. In addition to be able to push
or pop at the top of the pushdown store, a stack automaton can move 
its storage head (stack pointer) \emph{inside} the stack to read  
stack symbols, but without altering the contents. 
In this way, it is possible to read but not to change the stored information.
Over the years, stack automata have aroused great interest and have 
been studied in different variants. Apart from distinguishing deterministic
and nondeterministic computations, the original two-way input reading variant
has been restricted to one-way~\cite{Ginsburg:1967:owsa:art}.
Further investigated restrictions
concern the usage of the stack storage. A stack automaton is said to be
\emph{non-erasing} if no symbol may be popped from the
stack~\cite{hopcroft:1967:nesa},
and it is \emph{checking} if it cannot push any symbols once the 
stack pointer has moved into the stack~\cite{greibach:1969:caaowsl}.
While the early studies 
of stack automata have extensively been done in relation with AFL theory as well
as time and space 
complexity~\cite{gurari:1982:sasa,hopcroft:1968:dsaqo,ibarra:1971:csttcc,lange:2010:nopcdowsl,shamir:1974:cscfpgapcl},
more recent papers consider the computational power gained in generalizations by
allowing the input head to jump~\cite{kosarajyu:1974:owsawj}, allowing
multiple input heads, multiple stacks~\cite{ibarra:2021:gocsach}, and 
multiple reversal-bounded counters~\cite{ibarra:2018:vocsaoudp}.
The stack size required to accept a language by stack automata has been
considered as well~\cite{ibarra:2021:scosam}.
In~\cite{kutrib:2017:tdpda} the property of working input-driven has been 
imposed to stack automata, and their capacities as transducer are studied
in~\cite{bensch:2017:dst}.

All these models have in common that their storage structures are linear. 
Therefore, it is a natural idea to generalize stack automata by replacing
the stack storage by some non-linear data structure. 
In~\cite{kutrib:2023:twsa:proc} tree-walking-storage automata have been
introduced, which are essentially stack automata
with a tree-like stack. As for classical stack automata,
tree-walking-storage automata have the additional
ability to move the storage head (here tree pointer) inside the 
tree without altering the contents. The possible moves of the tree pointer
correspond to those of tree walking automata.
In this way, it is possible to read but not to change the stored information.
In addition, a tree-walking-storage automaton can append (push) a non-existent
descendant to a tree node and remove (pop) a leaf from the tree. 
A main focus in~\cite{kutrib:2023:twsa:proc} is on the comparisons of 
the different variants of tree-walking-storage
automata as well as on the comparisons with classical stack automata.
It turned out that the checking variant is no more powerful than 
classical checking stack automata. In particular it is shown
that in the case of unlimited time 
deterministic tree-walking-storage automata are as powerful as Turing
machines. This result suggested to consider
time constraints for deterministic tree-walking-storage automata.
The computational capacities of polynomial-time
non-erasing tree-walking-storage automata and non-erasing stack automata
are separated. Moreover, it is shown that 
non-erasing tree-walking-storage and tree-walking-storage automata
are equally powerful.

Here we continue the study of tree-walking-storage automata by imposing a
very strict time limit. We consider the minimal time to solve non-trivial
problems, that is, we consider real-time computations. This natural limitation
has been investigated from the early beginnings of complexity theory.
Already before the seminal paper~\cite{Hartmanis:1965:cca}, Rabin
considered computations such that if the problem (the input data) consists
of $n$ symbols then the computation must be performed in $n$ basic steps, one step
per input symbol~\cite{rabin:1963:rtc}.

Before we turn to our main results and the organization of the paper, we
briefly mention different approaches to introduce tree-like stacks.
So-called pushdown tree automata~\cite{Guessarian:1983:pta}
extend the usual string pushdown automata by allowing trees instead of
strings in both the input and the stack. So, these machines accept trees
and may not explore the interior of the stack. Essentially, this model
has been adapted to string inputs and tree-stacks where the so-called
\emph{tree-stack automaton} can
explore the interior of the tree-stack in read-only mode~\cite{golubski:1996:tsa}.
However, in the writing-mode a new tree can be pushed on the stack employing the subtrees of the old
tree-stack, that is, subtrees can be permuted, deleted, or copied. If the root of the tree-stack
is popped, exactly one subtree is left in the store. Another model also
introduced under the name \emph{tree-stack automaton} gave up the bulky way of pushing and popping at
the root of the tree-stack~\cite{denkinger:2016:aacfmcfl:proc}. However, this
model may alter the interior nodes of the tree-stack. Therefore, the
tree-stack is actually a non-linear Turing tape.
Therefore, we have chosen the name \emph{tree-walking-storage automaton}, so
as not to have one more model under the name of tree-stack automaton.

The idea of a tree-walking process originates from~\cite{Aho:1971:tocfg}.
A tree-walking automaton is a sequential model that processes input trees. For example, it is known that
deterministic tree-walking automata are strictly weaker than nondeterministic
ones~\cite{Bojanczyk:2006:twacbd} and that even nondeterministic tree-walking automata cannot accept
all regular tree languages~\cite{Bojanczyk:2008:twadnrarl}.

The paper is organized as follows.
The definition of the models and an illustrating example are given in
Section~\ref{sec:prelim}.
Section~\ref{sec:comp-cap}
is devoted to compare the computational capacity
of real-time deterministic tree-walking-storage automata with
some classical types of acceptors.
It is shown that the possibility to create
tree-storages of certain types in real time can be utilized to
accept further, even unary, languages by
real-time deterministic, even non-erasing, tree-walking-storage automata.
To this end, the non-semilinear unary language of the words whose lengths
are double Fibonacci numbers is used as a witness.

Then, a technique for disproving that languages are 
accepted is established for real-time tree-walking-storage automata.
The technique is based on equivalence
classes which are induced by formal languages. If some language
induces a number of equivalence classes which exceeds the number 
of classes distinguishable by a certain device, then the
language is not accepted by that device.
Applying these results, we show that there is a context-free language
which is not accepted by any tree-walking-storage automaton in real time.
For the comparison with classical deterministic one-way stack automata
we show that the unary language $\{\,a^{n^3}\mid n\geq 0\,\}$
is a real-time tree-walking-storage automaton language. It is known
from~\cite{ogden:1969:itsl} that this language  
is not accepted by any classical deterministic one-way stack automaton.
Finally, in Section~\ref{sec:closure} some basic closure properties
of the language families in question are derived.
It turns out that the two families in question have the same properties and,
in particular, share all but one of these closure properties with the important family
of deterministic context-free languages.
In particular, we consider 
Boolean operations (complementation, union, intersection) and AFL operations 
(union, intersection with regular languages, homomorphism, inverse homomorphism, 
concatenation, iteration). 
The results are summarized in Table~\ref{tab:closure} at the end of the section.

\section{Definitions and Preliminaries}\label{sec:prelim}

Let $\Sigma^*$ denote the \emph{set of all words} over the finite alphabet
$\Sigma$.
The \emph{empty word} is denoted by $\lambda$, and
$\Sigma^+ = \Sigma^* \setminus \{\lambda\}$. The set of words of length 
$n\geq 0$ is denoted by $\Sigma^{n}$.
The \emph{reversal} of a word $w$ is denoted by $w^R$.  For the \emph{length} of~$w$ we
write~$|w|$. 
We use $\subseteq$ for \emph{inclusions} and~$\subset$ for \emph{strict inclusions}.
We write~$|S|$ for the cardinality of a
set~$S$. 
We say that two language families $\mathscr{L}_1$ and~$\mathscr{L}_2$
are \emph{incomparable} if~$\mathscr{L}_1$ is not a subset of~$\mathscr{L}_2$
and vice versa.

A tree-walking-storage automaton is an extension of a classical stack
automaton to a tree storage.  As for classical stack automata,
tree-walking-storage automata have the additional
ability to move the storage head (here tree pointer) inside the 
tree without altering the contents. The possible moves of the tree pointer
correspond to those of tree walking automata.
In this way, it is possible to read but not to change the stored information.
However, a classical stack automaton can push and pop at the top of the
stack. Accordingly, a tree-walking-storage automaton can append (push) a non-existent
descendant to a tree node and remove (pop) a leaf from the tree. 

Here we consider mainly deterministic one-way devices.
The trees in this paper are finite, binary trees whose nodes are labeled by
a finite alphabet $\Gamma$. A $\Gamma$-tree $T$ is represented by a
mapping from a finite, non-empty, prefix-closed subset of $\{l,r\}^*$
to $\Gamma \cup\{\bot\}$, such that $T(w)=\bot$ if and only if $w=\lambda$.
The elements of the domain of $T$ are called \emph{nodes of the tree}. 
Each node of the tree has a \emph{type} from $\type=\{-,l,r\}\times\{-,+\}^2$, where the first component
expresses whether the node \emph{is} the root ($-$), a left descendant ($l$), or a right
descendant ($r$), and the second and third components tell whether the node 
\emph{has} a left and right descendant ($+$), or not ($-$).
A \emph{direction} is an element from $\direct = \{u, s, d_l, d_r\}$, where $u$
  stands for `up', $s$ stands for `stay', $d_l$ stands for `left descendant' 
and $d_r$ for `right descendant'.

A \emph{deterministic tree-walking-storage automaton ($\twsda$)} is a system
\mbox{$M=\langle Q, \Sigma, \Gamma, \delta, q_0, \rightend, \bot, F\rangle$,} where 
$Q$ is the finite set of \emph{internal states},~$\Sigma$ is 
the finite set of \emph{input symbols} not containing the \emph{endmarker}~$\rightend$,
$\Gamma$ is the finite set of \emph{tree symbols},
$q_0 \in Q$ is the \emph{initial state},
$\bot \notin \Gamma$ is the \emph{root symbol},
$F\subseteq Q$ is the set of \emph{accepting states}, and
\begin{multline*}
\delta\colon 
Q \times (\Sigma \cup \{\lambda,\rightend\}) \times \type\times (\Gamma
\cup\{\bot\})
\rightarrow\\
Q \times (\direct \cup \{\pdpop\}\cup\{\,\pdpush(x,d)\mid x\in\Gamma,
d\in\{l,r\}\,\})
\end{multline*}
is the \emph{transition function}.
There must never be a choice of using
an input symbol or of using~$\lambda$ input. So,
it is required that for all $q$ in~$Q$,~$(t_1,t_2,t_3)\in\type$,
and $x$ in~$\Gamma\cup\{\bot\}$:
if $\delta(q,\lambda,(t_1,t_2,t_3),x)$ is defined, then
$\delta(q,a,(t_1,t_2,t_3),x)$
is undefined for all~$a$ in~$\Sigma\cup \{\rightend\}$.

A \emph{configuration} of a $\twsda$
is a quadruple $(q, v, T, P)$,  where $q \in Q$ is the current state, 
\mbox{$v \in \Sigma^*\{\rightend, \lambda\}$} is the unread part of the
input, $T$ is the current $\Gamma$-tree, and $P$ is an element of the domain of $T$,
called the \emph{tree pointer}, that is the current node of $T$.
The \emph{initial configuration} for input $w$ is set to 
$(q_0, w\rightend, T_0, \lambda)$, where  $T_0(\lambda)=\bot$ and $T_0$ is undefined
otherwise.

During the course of its computation, $M$ runs through a
sequence of configurations. 
In a given configuration $(q, v, T, P)$, $M$ is in state $q$, 
reads the first symbol of $v$ or $\lambda$, knows the 
type of the current node $P$, and sees the label~$T(P)$ of 
the current node. 
Then it applies $\delta$ and, thus, enters a new state
and either moves the tree pointer along a direction, removes the current node 
(if it is a leaf) by $\pdpop$, or appends a new descendant to the current
node (if this descendant does not exist) by $\pdpush$.
Here and in the sequel it is understood that~$\delta$ is well defined in the
sense that it will never move the tree pointer to a non-existing node,
will never pop a non-leaf node, and will never push an existing descendant.
This normal form is always available through effective constructions.

One step from a configuration to its
successor configuration is denoted by~$\vdash$, and the reflexive 
and transitive (resp., transitive) closure of~$\vdash$ 
is denoted by~$\vdash^*$ \mbox{(respectively $\vdash^+$).}
Let $q \in Q$, $av\in \Sigma^*\rightend$ with $a \in \Sigma \cup \{
\lambda,\rightend \}$, $T$ be a $\Gamma$-tree,~$P$ be a tree pointer of $T$,
and $(t_1,t_2,t_3)\in\type$ be the type of the current node~$P$.
We set 

\begin{enumerate}
\item $(q, av, T, P)\vdash (q', v, T, P')$
with $P=P'l$ or $P=P'r$,\\
if $t_1\neq -$ and $\delta(q,a,(t_1,t_2,t_3), T(P)) = (q',u)$, 
(move the tree pointer up),
\smallskip
\item $(q, av, T, P)\vdash (q', v, T, P)$,\\
if $\delta(q,a,(t_1,t_2,t_3), T(P)) = (q',s)$, 
(do not move the tree pointer),
\smallskip
\item $(q, av, T, P)\vdash (q', v, T, P')$
with $P'=Pl$,\\
if $t_2= +$ and $\delta(q,a,(t_1,t_2,t_3), T(P)) = (q',d_l)$, 
(move the tree pointer to the left descendant),
\smallskip
\item $(q, av, T, P)\vdash (q', v, T, P')$
with $P'=Pr$,\\
if $t_3= +$ and $\delta(q,a,(t_1,t_2,t_3), T(P)) = (q',d_r)$, 
(move the tree pointer to the right descendant),
\smallskip
\item $(q, av, T, P)\vdash (q', v, T', P')$
with $P=P'l$ or $P=P'r$, $T'(P)$ is undefined and $T'(w)=T(w)$ for $w\neq P$,\\
if $t_2=t_3= -$ and $\delta(q,a,(t_1,t_2,t_3), T(P)) = (q',\pdpop)$, 
(remove the current leaf node, whereby the tree pointer is moved up),
\smallskip
\item $(q, av, T, P)\vdash (q', v, T', P')$
with $P'=Pl$, $T'(Pl)=x$ and $T'(w)=T(w)$ for $w\neq Pl$,\\
if $t_2= -$ and $\delta(q,a,(t_1,t_2,t_3), T(P)) = (q',\pdpush(x,l))$, 
(append a left descendant to the current node, whereby the tree pointer is
moved to the descendant),
\smallskip
\item $(q, av, T, P)\vdash (q', v, T', P')$
with $P'=Pr$, $T'(Pr)=x$ and $T'(w)=T(w)$ for $w\neq Pr$,\\
if $t_3= -$ and $\delta(q,a,(t_1,t_2,t_3), T(P)) = (q',\pdpush(x,r))$, 
(append a right descendant to the current node, whereby the tree pointer is
moved to the descendant).
\end{enumerate}

Figure~\ref{fig:Up-Left-transitions} illustrates the transitions that move the tree pointer up,
respectively to the left descendant.

\begin{figure}[!ht]
    \includegraphics[width=\textwidth]{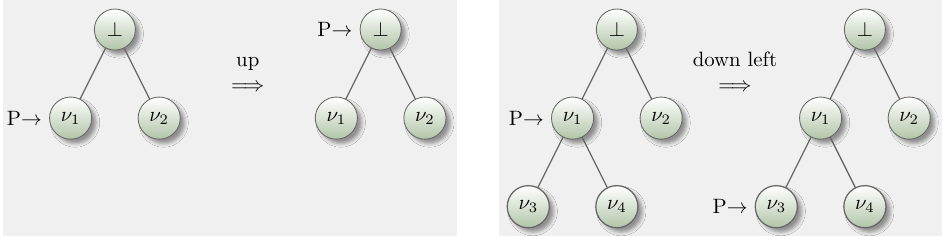}
    \caption{Up and left transitions}
    \label{fig:Up-Left-transitions}
\end{figure}

Figure~\ref{fig:Push-Pop} illustrates the push, respectively the pop transitions.
All remaining transitions are analogous.

\begin{figure}[!ht]
    \includegraphics[width=\textwidth]{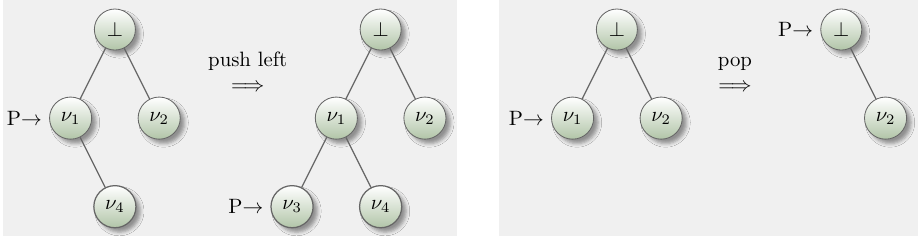}
    \caption{Push left and pop operations}
    \label{fig:Push-Pop}
\end{figure}

So, a classical stack automaton can be seen as a 
tree-walking-storage automaton all of whose right descendants
of the tree-storage are not present.
In accordance with stack automata, a $\twsda$
is said to be \emph{non-erasing} ($\twsdnea$) if it is not allowed to pop
from the tree.

A $\twsdca$ $M$ \emph{halts} if the transition function
is not defined for the current configuration. 
A word $w$ is \emph{accepted} if
the machine halts in an accepting state after having read the input 
$w\rightend$ entirely, otherwise it is
\emph{rejected}. The \emph{language accepted} by $M$ is
$L(M)=\{\, w\in \Sigma^* \mid w \text{ is accepted by } M\,\}$.

A $\twsda$ works in \emph{real time} if its transition function is
undefined for $\lambda$ input. That is, it reads one symbol from the 
input at every time step, thus, halts on input $w$ after at most $|w|+1$ steps.
 
We write $\dsa$ for deterministic one-way stack automata,
$\dnesa$ for the non-erasing, and $\dcsa$ for the checking variant.
The family of languages accepted by a device of type \textsf{X}
is denoted by~$\lfam(\textsf{X})$. We write in particular~$\rtf(\textsf{X})$
if acceptance has to be in real time.

In order to clarify our notion, we continue with an example.

\begin{example}\label{exa:a2n}
The language 
$L_\subtext{expo}=\{\,a^{2^n} \mid n\geq 0\,\}$
is accepted by some $\twsdnea$ in real time.

The basic idea of the construction is to let a $\twsdnea$
successively create tree-storages which are complete binary
trees. To this end, we construct a  
$\twsdnea$ $M=\langle Q, \{a\}, \{\bullet\}, \delta, q_l, \rightend, \bot,
F\rangle$ with state set $Q=\{q_l,q_p,q_d,q_r\}$
that runs in phases. In each phase a complete level is added to 
the complete binary tree. So, at the outset of the computation 
the tree-storage of $M$ forms a complete binary tree of level~$1$, that is
a single node. After the $(\ell-1)$th phase, the
tree-storage of $M$ forms a complete binary tree of level $\ell$, that is,
the tree has $2^\ell-1$ nodes. 
At the beginning and at the end of each phase the tree pointer
is at the root of the tree-storage. For simplicity, we construct $M$ such that
it works on empty input only. Later, it will be extended.

Next, we explain how a level is added when the tree-storage of $M$ forms a 
complete binary tree of level $\ell\geq 1$ and the tree pointer is at the root.

Let a star $*$ as component of the type of the current node in the
tree-walking-storage of~$M$ denote an arbitrary entry 
and $\gamma\in \Gamma\cup\{\bot\}$.
We set:
\begin{enumerate}
\item
$\delta(q_l, \lambda, (*,+,*), \gamma) = (q_l, d_l)$
\smallskip
\item
$\delta(q_l, \lambda, (*,-,-), \gamma) = (q_p,\pdpush(\bullet,l))$
\smallskip
\item
$\delta(q_p, \lambda, (l,-,-), \gamma) = (q_r,u)$
\end{enumerate}

First, state $q_l$ is used to move the tree pointer as far as
possible to the left (Transition 1). The leaf reached is the first node 
that gets descendants. After pushing a left descendant (Transition 2),
$M$ enters state $q_p$ to indicate that the last tree operation
was a push. If the new leaf was pushed as left descendant, the tree pointer
is moved up while state $q_r$ is entered (Transition 3). State $q_r$
indicates that the right subtree of the current node has still to be processed.

\begin{enumerate}\setcounter{enumi}{3}
\item
$\delta(q_r, \lambda, (*,+,-), \gamma) = (q_p,\pdpush(\bullet,r))$
\smallskip
\item
$\delta(q_p, \lambda, (r,-,-), \gamma) = (q_d,u)$
\end{enumerate}

If the current leaf has no right descendant and $M$ is in state $q_r$,
a right descendant is pushed (Transition~4). Again, state $q_p$
is entered. If the new leaf was pushed as right descendant, the tree pointer
is moved up while state $q_d$ is entered (Transition 5). State $q_d$
indicates that the current node has been processed entirely.

\begin{enumerate}\setcounter{enumi}{5}
\item
$\delta(q_d, \lambda, (l,*,*), \gamma) = (q_r, u)$
\smallskip
\item
$\delta(q_d, \lambda, (r,*,*), \gamma) = (q_d, u)$
\end{enumerate}

In state $q_d$, the tree pointer is moved to the ancestor. However,
if it comes to the ancestor from the left subtree, the right subtree is still
to be processed. In this case, Transition~6 sends the tree pointer to the
ancestor in state $q_r$. If the tree pointer comes to the ancestor from the
right subtree, the ancestor has entirely be processed and the tree pointer
is moved up in the appropriate state~$q_d$ (Transition~7).

\begin{enumerate}\setcounter{enumi}{7}
\item
$\delta(q_r, \lambda, (*,+,+), \gamma) = (q_l, d_r)$
\end{enumerate}

If there is a right descendant of the node visited in state $q_r$ then
the process is recursively applied to the right subtree by moving the 
tree pointer to the right descendant in state $q_l$ (Transition~8). 

The end of the phase that can uniquely be detected by $M$ when its tree 
pointer comes back to the root in state $q_d$ from the right.

Before we next turn to the extension of $M$, we consider the number of steps
taken to generate the complete binary trees. 
The total number of nodes in such a tree of level $\ell\geq 1$ is $2^\ell-1$.
Since all nodes except the root are connected by exactly one edge, 
the number of edges is $2^\ell-2$.
In order to increase the level of the tree-storage from $\ell$ to~\mbox{$\ell+1$,}
the tree pointer takes a tour through the tree as for a depth-first traversal.
So, every edge is moved along twice. In addition, each of the $2^\ell$ new nodes
is connected whereby for each new node the connecting (new) edge is also
moved along twice. In total, we obtain $2(2^\ell-2+2^\ell)= 2^{\ell+2}-4$
moves to increase the level.
Summing up the moves yields the number of moves taken by $M$ to increase 
the level of the tree-storage from initially $1$ to $\ell$ as
$$
\sum_{i=1}^{\ell-1} 2^{i+2}-4
=
-4(\ell-1)+ 2^{\ell+2}-8 = 2^{\ell+2}-4\ell-4.
$$

Now, the construction of $M$ is completed as follows.
Initially, $M$ performs~$8$ moves without any operation on the tree-storage.
That is, the tree pointer stays at the root. This can be realized by
additional states. Next, $M$ starts to run through the phases described above,
where at the end of phase $\ell-1$ the tree-storage forms a complete 
binary tree of level $\ell$. Before each phase,~$M$ performs
additionally $4$ moves without any operation on the tree-storage, respectively.

Finally, it remains to be described how the input is read and possibly
accepted. We let~$M$ read an input symbol at every move. An input
word is accepted if and only if its length is 
$2^0$, $2^1$, $2^2$, $2^3$, or
if~$M$ reads the last input symbol exactly
at the end of some phase.

In order to give evidence that $M$ works correctly, assume that the input
length is $2^x$, for some $x\geq 4$. Then $M$ starts to generate a
tree-storage that forms a complete binary tree of level $x-2$. 
The generation takes
$2^{x}-4(x-2)-4$ moves plus the initial delay of $8$ moves plus
the delay of totally $4(x-3)$ moves before each phase.
Altogether, this makes $2^{x}$ moves.
Since $M$ reads one input symbol at every move, it reads exactly $2^x$ symbols
and works in real time.
\eoe
\end{example}

\section{Computational Capacity}\label{sec:comp-cap}

This section is devoted to compare the computational capacity
of real-time deterministic tree-walking-storage automata with
some classical types of acceptors. On the bottom of the automata
hierarchy there are finite state automata characterizing the family
of regular languages $\reg$. Trivially, we have the inclusion 
$\reg \subset \rtf(\twsdnea)$ whose properness follows
from Example~\ref{exa:a2n}.

On the other end, we consider the deterministic linear bounded
automata that are characterizing the family of deterministic
context-sensitive languages $\dcsl$, that is, the complexity class
$\textsf{DSPACE}(n)$. In a real-time computation of some
$\twsda$, the tree-storage can grow not beyond $n+1$ nodes, where
$n$ is the length of the input. Since a binary tree with $n$ nodes
can be encoded with $O(n)$ bits, the tree-storage can be simulated
in deterministic space $n$. Therefore, a real-time $\twsda$ can be
simulated by a deterministic linear bounded automaton and we obtain
the inclusion $\rtf(\twsda)\subseteq \dcsl$.
 
We continue the investigation by showing that the possibility to create
tree-storages of certain types in real time can be utilized to
accept further, even unary, languages by
real-time deterministic, even non-erasing, tree-walking-storage automata.
To this end, we make the construction of Example~\ref{exa:a2n} more
involved and consider the non-semilinear unary language of the words whose lengths
are double Fibonacci numbers.

The Fibonacci numbers form a sequence in which each number is the sum of the
two preceding ones. The sequence starts from 1 and 1 (sometimes in the
literature it starts from~0 and 1). A prefix of the sequence is 
$1, 1, 2, 3, 5, 8, 13, 21, 34, 55, 89$. Correspondingly, we are speaking
of the $i$th Fibonacci number $f_i$, where $i$ is the position in the sequence
starting from $1$. So, for example, $f_6$ is the number~$8$.
We are going to prove that the language
$L_\subtext{fib}=\{\,a^{2n} \mid n \mbox{ is a Fibonacci number}\,\}$
is accepted by some $\twsdnea$ in real time 
by showing that a $\twsdnea$ can successively 
create tree-storages that are Fibonacci trees.
\emph{Fibonacci trees} are recursively defined as follows.
The Fibonacci tree $F_0$ of level $0$ is the empty tree.
The Fibonacci tree~$F_1$ of level $1$ is the tree
that consists of one node only.
The Fibonacci tree $F_\ell$ of level $\ell\geq 2$ consists of the root
whose left subtree is a Fibonacci tree of level $\ell-1$ and whose
right subtree is a Fibonacci tree of level $\ell-2$
(see Figure~\ref{fig:fib-tree}). For our purposes, the number of nodes
of a Fibonacci tree is important. It is well known that the number of nodes
of Fibonacci tree $F_\ell$, for $\ell\geq 2$, is
$\nu_\ell=\nu_{\ell-1}+\nu_{\ell-2}+1$.
In other words, we obtain $\nu_\ell=f_{\ell+2}-1$.

\begin{figure}
\centering
    \includegraphics[width=.9\textwidth]{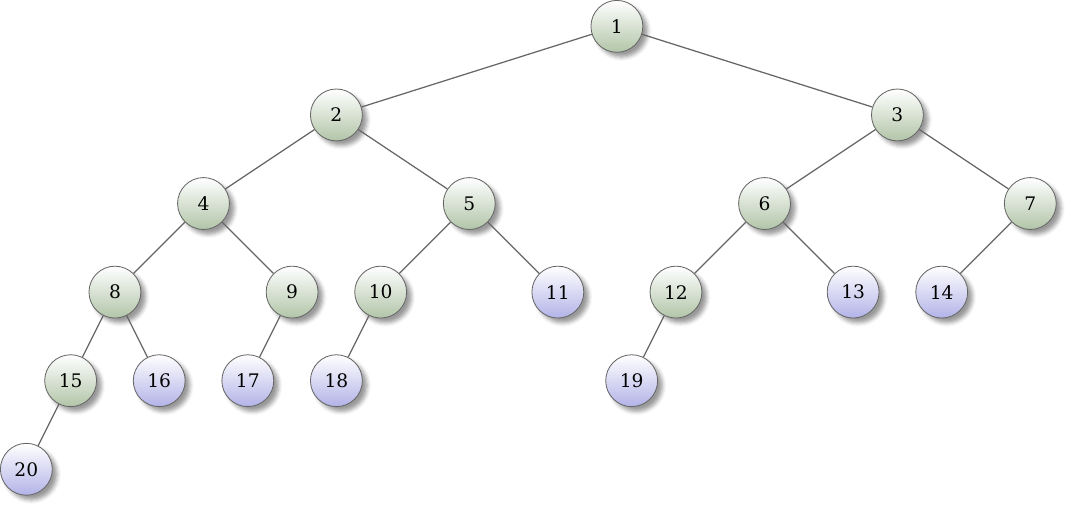}
    \caption{A Fibonacci tree of level $6$. Removing the blue nodes (the
      leaves) yields a Fibonacci tree of level $5$.}
    \label{fig:fib-tree}
\end{figure}

\begin{theorem}\label{theo:fib-in-twsdnea}
The language $L_\subtext{fib}$ is accepted by some $\twsdnea$ in real time.
\end{theorem}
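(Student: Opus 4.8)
The plan is to adapt the construction of Example~\ref{exa:a2n}: a $\twsdnea$ runs in phases, and at the end of the $(\ell-1)$th phase its tree-storage forms the Fibonacci tree $F_\ell$ with the tree pointer back at the root. The only conceptual change concerns how a single level is grown. To describe that growth cleanly, I would assign to every node the level $k$ of the Fibonacci subtree rooted at it, so that the root of $F_\ell$ carries $k=\ell$, a node with $k\ge 2$ has a left child of level $k-1$ and a right child of level $k-2$, and a node with $k=1$ is a leaf. Growing $F_\ell$ into $F_{\ell+1}$ amounts to increasing every such level by one simultaneously, which locally means: every old leaf receives a new left child, and every old node having only a left child (i.e.\ level $2$) receives a new right child, while nodes with two children are merely traversed and their subtrees grown recursively.

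The automaton performs this by a depth-first, left-before-right traversal, exactly as in Example~\ref{exa:a2n}, using states $q_l$ (descend left), $q_p$ (just pushed), $q_r$ (handle the right side), and $q_d$ (subtree completed, ascend). Transitions~1, 4, 5, 6, 7, 8 of Example~\ref{exa:a2n} are reused verbatim; the single decisive modification is Transition~3: after pushing a left child onto an old leaf by $\pdpush$ and returning to that leaf, the automaton enters $q_d$ (the node is finished) instead of $q_r$. The hard part is precisely the distinction between freshly created and original nodes within one pass: right after a left child has been appended, the former leaf has type $(*,+,-)$ and is reached from its left child, which is exactly the type and situation of a genuine level-$2$ node whose left subtree has just been grown, the former needing no further child and the latter needing a right child. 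Routing a former leaf straight into $q_d$ resolves this, because a node acquires a right child (Transition~4) only when it is entered in state $q_r$, and $q_r$ is reached solely by ascending from a fully grown left subtree (Transition~6); a former leaf, being sent through $q_d$, never enters $q_r$ and hence is never over-grown, whereas a true level-$2$ node is reached in $q_r$ and correctly receives its right child. I would confirm correctness by checking the step $F_3\to F_4$ transition by transition, verifying that exactly the three required leaves are appended and that the phase terminates with the pointer at the root in state $q_d$, which serves as the uniform end-of-phase signal.

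It then remains to handle timing so that input of length $2n$ is accepted exactly when $n$ is a Fibonacci number, reading one symbol per move to stay in real time. Using $\nu_\ell=f_{\ell+2}-1$, the depth-first growth traverses every edge of $F_{\ell+1}$ twice, so the phase producing $F_j$ costs $2(\nu_j-1)$ moves; telescoping gives a total of $2f_{\ell+4}-4\ell-6$ moves to build $F_\ell$ from the initial single node. As in Example~\ref{exa:a2n}, I would absorb the linear correction by a constant initial delay together with a constant delay before each phase (here $10$ moves initially and $4$ before every phase, i.e.\ $4\ell+6$ in total), so that each phase ends after exactly $2f_{\ell+4}$ moves and hence after reading exactly $2f_{\ell+4}$ symbols; the factor two, and thus the appearance of \emph{double} Fibonacci numbers, is due to each edge being traversed twice. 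Reading one symbol at every move and accepting precisely when the input ends at such a phase boundary yields acceptance of all lengths $2f_k$ with $k\ge 5$, and the finitely many small cases $2f_1,\dots,2f_4\in\{2,4,6\}$ are accepted directly by a finite initial table. Since the machine only ever pushes, it is non-erasing, and since it consumes one input symbol per step it runs in real time, which completes the argument.
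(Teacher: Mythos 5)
Your proposal is correct and follows essentially the same route as the paper's proof: the same four-state phase scheme that grows Fibonacci trees by a depth-first traversal, with the same decisive modification (after any push the pointer ascends in $q_d$, so a former leaf never enters $q_r$ and only genuine level-$2$ nodes receive a right child), the same move count $2f_{\ell+4}-4\ell-6$, and the same trick of absorbing the linear correction by constant delay moves so that each phase boundary falls exactly at $2f_x$ read symbols. The only deviation is in the bookkeeping constants (you use $10$ initial plus $4$ pre-phase delay moves, the paper uses $6$ initial plus $4$ before the first and after each phase), which is arithmetically equivalent.
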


\begin{proof}
We proceed as in Example~\ref{exa:a2n} and construct a 
$\twsdnea$ $M=\langle Q, \{a\}, \{\bullet\}, \delta, q_l, \rightend, \bot,
F\rangle$ with state set
$Q=\{q_l,q_p,q_d,q_r\}$
that runs in phases. Again, at the outset of the computation the tree-storage of $M$
forms a Fibonacci tree of level~$1$. After the $(\ell-1)$th phase, the
tree-storage of $M$ forms a Fibonacci tree of level $\ell$.
At the beginning and at the end of each phase the tree pointer
is at the root of the tree-storage. Again, we first construct $M$ such that
it works on empty input and extend it later.

So, assume that the tree-storage of $M$ forms a Fibonacci tree of level
$\ell\geq 1$ and that its tree pointer is at the root.
According to the recursive definition of Fibonacci trees, $M$ will increase
the levels of the subtrees of every node by one in a bottom-up fashion.
To this end, first state $q_l$ is used to move the tree pointer as far as
possible to the left. The leaf reached is the first node to be dealt with.
In particular, this leaf gets a left descendant. See Figure~\ref{fig:fib-tree}
for an example, where the Fibonacci tree of level 5 depicted by the
green nodes is extended to the entire Fibonacci tree of level 6 by
adding the blue nodes.

Let a star $*$ as component of the type of the current node in the
tree-walking-storage of~$M$ denote an arbitrary entry 
and $\gamma\in \Gamma\cup\{\bot\}$.
We set:
\begin{enumerate}
\item
$\delta(q_l, \lambda, (*,+,*), \gamma) = (q_l, d_l)$
\smallskip
\item
$\delta(q_l, \lambda, (*,-,-), \gamma) = (q_p,\pdpush(\bullet,l))$
\smallskip
\item
$\delta(q_p, \lambda, (*,*,*), \gamma) = (q_d,u)$
\end{enumerate}

Essentially, the meaning of the states are as in Example~\ref{exa:a2n}.
State $q_p$ indicates that the last tree operation
was a push, and the meaning of state $q_d$ is to indicate that the
current node has entirely be processed and that its ancestor is the next
node to consider.
So far, in Figure~\ref{fig:fib-tree} node 20 has been
pushed and the tree pointer is back at node 15 in state $q_d$.

\begin{enumerate}\setcounter{enumi}{3}
\item
$\delta(q_d, \lambda, (l,*,*), \gamma) = (q_r, u)$
\smallskip
\item
$\delta(q_d, \lambda, (r,*,*), \gamma) = (q_d, u)$
\end{enumerate}

Node 15 has entirely be processed, since it got a new left subtree of level~1
and, thus, stick with a right subtree of level 0 (the empty tree).
By Transitions~4 and~5 the tree pointer is moved to the ancestor. However,
if it comes to the ancestor from the left subtree, the right subtree is still
to be processed. In this case, Transition~4 sends the tree pointer to the
ancestor in state $q_r$. If the tree pointer comes to the ancestor from the
right subtree, the ancestor has entirely be processed and the tree pointer
is moved up in the appropriate state~$q_d$ (Transition~5).

\begin{enumerate}\setcounter{enumi}{5}
\item
$\delta(q_r, \lambda, (*,+,-), \gamma) = (q_p,\pdpush(\bullet,r))$
\smallskip
\item
$\delta(q_r, \lambda, (*,+,+), \gamma) = (q_l, d_r)$
\end{enumerate}

If there is a right descendant of the node visited in state $q_r$ then
the process is recursively applied to the right subtree by moving the 
tree pointer to the right descendant in state $q_l$ (Transition~7). Otherwise, if there
is no right descendant of the node visited in state $q_r$ then this
empty right subtree has to be replaced by a subtree of level 1. This is
simply done by pushing a single node (Transition~6).
In Figure~\ref{fig:fib-tree}, node~16 has been pushed as right descendant of
node 8. 
Then, after the next few steps, node 8 has entirely processed and
node 4 is reached in state~$q_r$. Continuing, this process will end
when node 3 has entirely been processed and the root is reached from
the right subtree in state $q_d$. This is the end of the phase that can
uniquely be detected by $M$ when its tree pointer comes back to the root
from the right.

Before we next turn to the extension of $M$, we consider the number of steps
taken to generate the Fibonacci tree. 

To this end, let $\ell\geq 1$ and recall that the number of nodes
of Fibonacci tree~$F_\ell$ is $\nu_\ell=f_{\ell+2}-1$. Since all
nodes except the root are connected by exactly one edge, 
the number of edges of Fibonacci tree $F_\ell$ is $\kappa_\ell=f_{\ell+2}-2$.
We derive that the number of nodes of $F_{\ell+1}$ is 
$\nu_{\ell+1}=f_{\ell+3}-1= f_{\ell+2}-1+f_{\ell+1}$ and the number of its
edges is $\kappa_{\ell+1}= f_{\ell+2}-2+f_{\ell+1}$.
In order to increase the level of the tree-storage from $\ell$ to~\mbox{$\ell+1$,}
the tree pointer takes a tour through the tree as for a depth-first traversal.
So, every edge of $F_\ell$ is moved along twice. In addition, each new node
is connected whereby for each new node the connecting (new) edge is also
moved along twice. In total, we obtain $2(f_{\ell+2}-2)$ plus $2f_{\ell+1}$
moves, that is, $2f_{\ell+3}-4$ moves. 
Summing up the moves yields the number of moves taken by $M$ to increase 
the level of the tree-storage from initially $1$ to $\ell$ as
\begin{multline*}
\sum_{i=1}^{\ell-1} 2 f_{i+3}-4
=
-4(\ell-1)+ 2\sum_{i=1}^{\ell-1} f_{i+3}
=
-4(\ell-1)-8  + 2\sum_{i=1}^{\ell+2} f_{i}\\
=
2(f_{\ell+4}-1) -4\ell-4
=
2f_{\ell+4} -4\ell-6,
\end{multline*}
since, in general,
$\sum_{i=1}^{\ell} f_{i} = f_{\ell+2}-1$.

Now, the construction of $M$ is completed as follows.
Initially, $M$ performs~$6$ moves without any operation on the tree-storage.
That is, the tree pointer stays at the root. This can be realized by
additional states. Next, $M$ starts to run through the phases described above,
where at the end of phase $\ell$ the tree-storage forms a Fibonacci tree
of level $\ell+1$. Before the first and after each phase,~$M$ performs
additionally $4$ moves without any operation on the tree-storage, respectively.

Finally, it remains to be described how the input is read and possibly
accepted. We let~$M$ read an input symbol at every move. An input
word is accepted if and only if its length is $2f_1$, $2f_2$, $2f_3$, $2f_4$, or
if~$M$ reads the last input symbol exactly
at the end of some phase.
In order to give evidence that $M$ works correctly, assume that the input
length is $2f_x$, for some $x\geq 5$. Then $M$ starts to generate a
tree-storage
that forms a Fibonacci tree of level $x-4$. The generation takes
$2f_{x} -4(x-4)-6$ moves plus the initial delay of $6$ moves plus
the delay of totally $4(x-4)$ moves before the first and after each phase.
Altogether, this makes 
$2f_{x} -4(x-4)-6+6+4(x-4)=2f_x$ moves. Since $M$ reads one input symbol at
every move, it reads exactly $2f_x$ symbols. Clearly, $M$ works in
real time.
\end{proof}

Now we turn to a technique for disproving that languages are 
accepted. In general, the method is based on equivalence
classes which are induced by formal languages. If some language
induces a number of equivalence classes which exceeds the number 
of classes distinguishable by a certain device, then the
language is not accepted by that device.
First we give the definition of an equivalence relation
which applies to real-time $\twsda$s.

Let $L\subseteq \Sigma^*$ be a language and $\ell\geq 1$ be an integer constant. 
Two words $w\in \Sigma^*$ and $w' \in \Sigma^*$ are 
\emph{$\ell$-equivalent with respect to $L$}
if and only if 
$
wu\in L \iff w'u\in L
$
for all $u\in \Sigma^*$, $|u|\leq \ell$.
The number of $\ell$-equivalence classes with respect to $L$ is 
denoted by $E(L,\ell)$.

\begin{lemma}\label{lem:rttwsca-equivalence-classes}
Let $L\subseteq \Sigma^*$ be a language accepted by some $\twsda$ in real
time. Then there exists a constant $p\geq 1$ such that
$E(L,\ell)\leq 2^{p\cdot 2^{\ell}}$.
\end{lemma}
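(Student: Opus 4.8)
The plan is to bound the number of $\ell$-equivalence classes of $L$ by the number of distinguishable \emph{local configurations} that a real-time $\twsda$ accepting $L$ can exhibit after reading an input prefix. Fix such an automaton $M=\langle Q,\Sigma,\Gamma,\delta,q_0,\rightend,\bot,F\rangle$ and, as a harmless normal form, assume $M$ never halts before reading its whole input (route the missing transitions into a non-accepting dead state; the prefixes on which $M$ would have halted early reject every continuation and hence all land in a single $\ell$-equivalence class, costing at most one extra class). Because $M$ is deterministic and real-time, after reading a prefix $w$ in exactly $|w|$ steps it sits in a uniquely determined configuration $(q,\rightend,T,P)$, and this triple $(q,T,P)$ is independent of whatever follows $w$.

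The decisive observation I would exploit is that, from such a configuration, reading any suffix $u$ with $|u|\le\ell$ together with the endmarker costs at most $\ell+1$ steps, and in real time each step moves, pushes, or pops at most one node. Hence the tree pointer never leaves the ball $B$ consisting of all nodes within graph distance $\ell+1$ of $P$: every node it inspects, creates, or removes is the current node at some step and therefore lies in $B$, and each transition depends only on the current state, the scanned input symbol, and the \type\ and label of the current node. Consequently the accept/reject outcome of $wu\rightend$ is a function of the \emph{local configuration} of $w$, which I define as the state $q$ together with the labeled induced subtree on $B$, each of its nodes tagged with its full \type. The type tags are what make this self-contained: a boundary node at distance exactly $\ell+1$ still carries the one or two bits telling whether it has children \emph{outside} $B$, and the positions that are free for a $\pdpush$ are thereby determined. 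Thus if two prefixes $w,w'$ yield the same local configuration, then $wu\in L\iff w'u\in L$ for all $|u|\le\ell$, so they are $\ell$-equivalent and $E(L,\ell)$ is bounded by the number of realizable local configurations.

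It remains to count these. Since every node of a binary tree has at most three neighbours (parent, left and right child), the ball $B$ of radius $\ell+1$ contains at most $1+\sum_{d=1}^{\ell+1}3\cdot 2^{d-1}=3\cdot 2^{\ell+1}-2$ nodes, i.e.\ $O(2^{\ell})$ many. Each of these boundedly many positions is either absent or present with one of the finitely many labels from $\Gamma\cup\{\bot\}$ and one of the finitely many types, so there is a constant $b$ depending only on $|\Gamma|$ with at most $b^{\,3\cdot 2^{\ell+1}}$ labeled balls, hence at most $|Q|\cdot b^{\,3\cdot 2^{\ell+1}}$ local configurations altogether. Taking logarithms, $\log_2\!\bigl(|Q|\cdot b^{\,3\cdot 2^{\ell+1}}\bigr)=\log_2|Q|+6\log_2(b)\cdot 2^{\ell}\le p\cdot 2^{\ell}$ for a suitable constant $p\ge 1$ and all $\ell\ge 1$, whence $E(L,\ell)\le 2^{p\cdot 2^{\ell}}$.

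The step I expect to need the most care is the justification that the local configuration really determines the outcome: one must verify that every $\pdpush$ performed while reading the suffix creates a node inside $B$ (it does, because the pointer stays within distance $\ell+1$ and a push moves it one node further from its current position), and that recording the \type\ of the boundary nodes suffices to reproduce every transition without knowing the tree beyond $B$. The other place where precision matters is the neighbourhood bound itself: it is essential that a ball in a degree-three tree has only $O(2^{\ell})$ nodes, since this single-exponential size is exactly what turns the number of labelings into $2^{O(2^{\ell})}$ and yields the claimed double-exponential bound $2^{p\cdot 2^{\ell}}$ rather than something larger.
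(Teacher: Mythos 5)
Your proposal is correct and follows essentially the same route as the paper: both arguments bound $E(L,\ell)$ by counting the distinguishable configurations after a prefix, observing that a real-time $\twsda$ can touch only the $O(2^{\ell})$ tree nodes within distance $\ell+1$ of the tree pointer during the at most $\ell+1$ remaining steps, which yields $2^{O(2^{\ell})}$ classes. The only difference is cosmetic: the paper counts the local tree shapes via the Catalan-number estimate $C_n\leq 4^n$ times the $(|\Gamma|+1)^{n}$ labelings, whereas you encode each potential position in the ball as absent or present with label and type, and your explicit handling of pushes, pops, and boundary types is a careful spelling-out of what the paper leaves implicit.
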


\begin{proof}
The number of different binary trees with $n$ nodes is known to be the $n$th
Catalan number $C_n$. We have $C_0=1$ and
$C_{n+1}=\frac{4n+2}{n+2}C_n$ (see, for example,~\cite{stanley:2015:cn:book}).
So, we obtain $C_n\leq 4^n$, which is a rough but for our purposes good enough
estimation.

Now, let $M$ be a real-time $\twsda$
with state set $Q$ and tree symbols $\Gamma$.
In order to determine an upper bound for the number of $\ell$-equivalence 
classes with respect to $L(M)$, we consider the possible 
configurations of $M$ after reading all but $\ell$ input symbols. 
The remaining computation depends on the last $\ell$ input symbols,
the current state of $M$, the current $\Gamma$-tree as well as the current
tree pointer $P$. Since~$M$ works in real time, in its last at most $\ell+1$
steps it can only access at most $\ell+1$ tree nodes, starting with the
current node. These may be located in the upper $\ell$ levels of 
the tree rooted in the current node, or at the upper $\ell-1$ levels of 
the tree rooted in the ancestor of the current node, etc. So, there are
no more than $2^{\ell+1}-1+2^{\ell-1}+2^{\ell-2}+\cdots +2^{0}\leq 2^{\ell+2}$
nodes that can be accessed. Though the corresponding part of the tree can have
certain structures only, we consider all non-isomorphic binary trees with
$2^{\ell+2}$ nodes. Each node may be labeled by a symbol of $\Gamma$ or by
$\bot$.
Together, there are at most
$$
|Q| \cdot C_{2^{\ell+2}} \cdot (|\Gamma|+1)^{2^{\ell+2}}
\leq 
2^{\log(|Q|)+ 2\cdot 2^{\ell+2} + \log(|\Gamma|+1) \cdot 2^{\ell+2}}
=
2^{\log(|Q|)+ 4\cdot (2 + \log(|\Gamma|+1)) \cdot 2^{\ell}}
$$
different possibilities. Setting $p=\log(|Q|)+ 4(2 + \log(|\Gamma|+1))$, we
derive
$$
2^{\log(|Q|)+ 4\cdot (2 + \log(|\Gamma|+1)) \cdot 2^{\ell}}\leq
2^{p\cdot 2^{\ell}}.
$$

Since the number of equivalence classes is not
affected by the last~$\ell$ input symbols, there are at 
most~$2^{p\cdot 2^{\ell}}$ equivalence classes.
\end{proof}

Next, we turn to apply Lemma~\ref{lem:rttwsca-equivalence-classes}
to show that there is a context-free language
which is not accepted by any $\twsda$ in real time.
To this end, we consider the homomorphism 
$h\colon \{\alpha_0,\alpha_1,\alpha_2,\alpha_3\}^*\to \{a,b\}^*$ defined as
$h(\alpha_0)=aa$, $h(\alpha_1)=ab$, $h(\alpha_2)=ba$, $h(\alpha_3)=bb$, 
and the witness language
$$
L_h=\{\, x_1 \dollar x_2\dollar \cdots\dollar x_k \border y \mid 
k\geq 0, x_i \in \{a,b\}^*, 1\leq i\leq k, \mbox{ and there exists } j \text{
  such that } x_j^R=h(y)\,\}.
$$

\begin{theorem}\label{theo:lindet-notin-rttwsda}
The language $L_h$
is not accepted by any $\twsda$ in real time.
\end{theorem}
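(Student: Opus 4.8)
The plan is to use Lemma~\ref{lem:rttwsca-equivalence-classes} in contrapositive form: assuming $L_h$ were accepted by some $\twsda$ in real time, the lemma would supply a constant $p\geq 1$ with $E(L_h,\ell)\leq 2^{p\cdot 2^{\ell}}$ for every~$\ell$, and I would derive a contradiction by exhibiting, for all large~$\ell$, far more than $2^{p\cdot 2^{\ell}}$ pairwise $\ell$-inequivalent words. The whole argument rests on separating a large family of prefixes using only suffixes of length at most~$\ell$, and the crucial feature that makes it work is that $h$ \emph{doubles lengths}.

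Fix $\ell\geq 1$ and put $m=\ell-1$. The key observation is that a suffix $u=\border y$ with $|y|=m$ has length $|u|=m+1=\ell$, yet it names a target word $h(y)$ of length $2m$. Since $h$ maps $\{\alpha_0,\alpha_1,\alpha_2,\alpha_3\}$ bijectively onto the four binary words of length~$2$, as $y$ ranges over all words of length~$m$ the value $h(y)$ ranges over \emph{every} binary word in $\{a,b\}^{2m}$. I would then restrict attention to prefixes of the form $w=x_1\dollar x_2\dollar\cdots\dollar x_k$ with each $x_i\in\{a,b\}^{2m}$ (so $w$ contains no $\border$). For such~$w$ the word $wu=x_1\dollar\cdots\dollar x_k\border y$ parses uniquely, because the $x_i$ and $y$ avoid $\dollar$ and $\border$, and it lies in $L_h$ precisely when $h(y)\in S_w$, where $S_w=\{x_1^R,\dots,x_k^R\}\subseteq\{a,b\}^{2m}$. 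Thus the only feature of $w$ relevant to membership of $wu$ (over the admissible suffixes) is the set~$S_w$.

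Next I would show that distinct sets yield $\ell$-inequivalent prefixes. Any $S\subseteq\{a,b\}^{2m}$ is realized by listing the reversals of its elements separated by $\dollar$, giving a prefix $w_S$ with $S_{w_S}=S$. If $S\neq S'$, I pick $t$ in their symmetric difference, write $t=h(y)$ with $|y|=m$, and take $u=\border y$ of length~$\ell$; then exactly one of $w_S u$, $w_{S'} u$ lies in $L_h$, so $w_S$ and $w_{S'}$ are $\ell$-inequivalent. Hence $E(L_h,\ell)\geq 2^{|\{a,b\}^{2m}|}=2^{2^{2m}}=2^{2^{2\ell-2}}$. Comparing with the ceiling $2^{p\cdot 2^{\ell}}$ from the lemma, and writing $2^{2\ell-2}=2^{\ell-2}\cdot 2^{\ell}$, the factor $2^{\ell-2}$ eventually dominates any fixed~$p$, so $E(L_h,\ell)>2^{p\cdot 2^{\ell}}$ once $\ell>\log(p)+2$, contradicting the lemma.

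I expect the arithmetic comparison to be routine; the conceptual heart is recognizing \emph{why the homomorphism $h$ is present at all}. A length-$\ell$ suffix over $\{a,b\}$ used directly could only single out a length-$\ell$ target, giving the weaker bound $E(L_h,\ell)\geq 2^{2^{\ell}}$, which does \emph{not} beat $2^{p\cdot 2^{\ell}}$ for $p\geq 1$. The doubling performed by $h$ is exactly what lifts the target length to~$2m\approx 2\ell$ and the class count to $2^{2^{2\ell-2}}$, clearing the $2^{p\cdot 2^{\ell}}$ barrier. The only care required is the unique-parse claim (immediate from the alphabet separation) and the degenerate small-$\ell$ cases, which are immaterial since only large~$\ell$ is needed.
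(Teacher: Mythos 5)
Your proposal is correct and follows essentially the same route as the paper: both apply Lemma~\ref{lem:rttwsca-equivalence-classes} and lower-bound $E(L_h,\ell)$ by the $2^{2^{2m}}$ subsets of $\{a,b\}^{2m}$, realized as $\dollar$-separated lists and separated by a length-$\ell$ suffix whose $h$-image (length-doubled, which is exactly the point, as you note) selects a symmetric-difference element, then beat $2^{p\cdot 2^{\ell}}$ for large $\ell$. The only deviations are cosmetic: you place $\border$ in the suffix rather than the prefix (giving $2^{2^{2\ell-2}}$ instead of the paper's $2^{2^{2\ell}}$, which changes nothing), and your handling of the reversal via $x_m=h(y)^R$-style bookkeeping is if anything slightly more careful than the paper's $\hat{u}^R$ shorthand.
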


\begin{proof}
We consider some integer constant $\ell\geq 1$ and show that
$E(L_h,\ell)$ exceeds the number of equivalence classes 
distinguishable by any real-time $\twsda$.
To this end, let $L_h^{(\ell)} \subset L_h$ be the language
of words from $L_h$ whose factors $x_i$, $1\leq i\leq k$,
all have length~$2\ell$.

There are $2^{2^{2\ell}}$ different subsets of $\{a,b\}^{2\ell}$.
For every subset $P=\{v_1,v_2,\dots, v_k\}\subseteq \{a,b\}^{2\ell}$,
we define a word  
$w_P=\dollar v_1 \dollar v_2\dollar \cdots\dollar v_k \border$. 
Now, let $P$ and $S$ be two different subsets. Then there is some
word $u\in \{a,b\}^{2\ell}$ such that $u$ belongs to the symmetric difference
of $P$ and $S$. Say, $u$ belongs to $P\setminus S$. Setting $\hat{u}=h^{-1}(u)$
We have $w_P \hat{u}^R \in L_h$ and $w_S \hat{u}^R \notin L_h$.
Therefore, language $L_h$ induces at 
least~$2^{2^{2\ell}}$ equivalence classes in $E(L_h,\ell)$.

On the other hand, if $L$ would be accepted by some real-time $\twsda$,
then, by Lemma~\ref{lem:rttwsca-equivalence-classes}, there is a 
constant $p\geq 1$ such that $E(L_h,\ell)\leq 2^{p\cdot 2^{\ell}}$. Since 
$L_h$ is infinite, we may choose $\ell$ large enough
such that $2^{2\ell} > p\cdot 2^{\ell}$.
\end{proof}

Since the language $L_h$ is context free and, on the other hand,
the non-semilinear unary language of Proposition~\ref{theo:fib-in-twsdnea}
belongs to $\rtf(\twsdnea)$, we have the following incomparabilities.

\begin{theorem}
The families $\rtf(\twsda)$ and $\rtf(\twsdnea)$ are both incomparable with
the family of context-free languages.
\end{theorem}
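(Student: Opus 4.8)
The plan is to unfold the definition of incomparability and then verify the required non-inclusions by combining the two witness languages already at hand. By definition, to show that $\rtf(\twsda)$ is incomparable with $\cfl$ I must exhibit a language in $\rtf(\twsda)\setminus\cfl$ together with a language in $\cfl\setminus\rtf(\twsda)$, and likewise for $\rtf(\twsdnea)$. Since every $\twsdnea$ is in particular a $\twsda$, we have $\rtf(\twsdnea)\subseteq\rtf(\twsda)$, so all four required separations will follow from just two facts: that $L_\subtext{fib}$ lies in $\rtf(\twsdnea)$ but not in $\cfl$, and that $L_h$ lies in $\cfl$ but not in $\rtf(\twsda)$.

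For the first fact, Theorem~\ref{theo:fib-in-twsdnea} already gives $L_\subtext{fib}\in\rtf(\twsdnea)\subseteq\rtf(\twsda)$, so it remains to argue $L_\subtext{fib}\notin\cfl$. Here I would invoke Parikh's theorem: the Parikh image of every context-free language is semilinear. Over the unary alphabet $\{a\}$ the Parikh image coincides with the length set, so a unary context-free language must have a semilinear, i.e. ultimately periodic, length set; equivalently, every unary context-free language is regular. The length set of $L_\subtext{fib}$ is $\{\,2f_n\mid n\geq 1\,\}$, whose consecutive gaps $2f_{n-1}$ grow without bound, so it is not ultimately periodic and hence not semilinear. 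Consequently $L_\subtext{fib}$ is not context-free, which yields both $\rtf(\twsda)\not\subseteq\cfl$ and $\rtf(\twsdnea)\not\subseteq\cfl$.

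For the second fact, Theorem~\ref{theo:lindet-notin-rttwsda} already gives $L_h\notin\rtf(\twsda)$, and the inclusion $\rtf(\twsdnea)\subseteq\rtf(\twsda)$ then yields $L_h\notin\rtf(\twsdnea)$ as well. It remains to check $L_h\in\cfl$, which I would do with a standard nondeterministic pushdown automaton: while scanning $x_1\dollar x_2\dollar\cdots$, the automaton nondeterministically selects one factor $x_j$ and pushes its symbols onto the stack as they are read, leaving the last symbol of $x_j$ on top; after the separator $\border$ it reads $y$ letter by letter, and for each symbol $\alpha_i$ it pops the two-letter block $h(\alpha_i)$ off the stack, thereby comparing $h(y)$ against the top-down (that is, reversed) content of the stack and accepting precisely when $x_j^R=h(y)$ with the stack emptied. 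This establishes $\cfl\not\subseteq\rtf(\twsda)$ and, a fortiori, $\cfl\not\subseteq\rtf(\twsdnea)$.

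Together these two facts give all four non-inclusions and hence both incomparabilities. The only genuinely new step beyond quoting the earlier theorems is the non-context-freeness of $L_\subtext{fib}$, and the main (though still routine) obstacle there is to justify that non-semilinearity rules out context-freeness; the cleanest route is Parikh's theorem specialized to the unary alphabet, where semilinearity collapses to ultimate periodicity of the length set.
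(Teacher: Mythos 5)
Your proposal is correct and takes essentially the same route as the paper, which likewise derives the incomparabilities from the two witnesses $L_\subtext{fib}\in\rtf(\twsdnea)\setminus\cfl$ (Theorem~\ref{theo:fib-in-twsdnea} plus non-semilinearity of the unary length set) and $L_h\in\cfl\setminus\rtf(\twsda)$ (Theorem~\ref{theo:lindet-notin-rttwsda}). The paper leaves the routine verifications implicit, whereas you correctly spell them out: the pushdown automaton guessing $x_j$ and matching $h(y)$ against the reversed stack content, and Parikh's theorem specialized to unary alphabets with the unbounded gaps $2f_{n-1}$ ruling out ultimate periodicity.
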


Next, we consider classical deterministic one-way stack automata.
It has been shown that the unary language 
$L_\subtext{cub}=\{\,a^{n^3}\mid n\geq 0\,\}$ is not accepted by any
$\dsa$~\cite{ogden:1969:itsl}.

\begin{proposition}\label{prop:a3-in-rttwsda}
The language $L_\subtext{cub}$ is accepted by some $\twsda$ in real time.
\end{proposition}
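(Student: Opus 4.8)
The goal is to accept $L_\subtext{cub}=\{\,a^{n^3}\mid n\geq 0\,\}$ by a real-time $\twsda$. The key is that $n^3 = \sum_{i=1}^{n}(3i^2-3i+1)$, the $i$th cube minus the $(i-1)$th cube being $3i^2-3i+1$; equivalently, consecutive cubes differ by successive odd-ish gaps governed by second differences. I would run $M$ in phases, where the purpose of phase $i$ is to consume exactly the gap of $3i^2-3i+1$ input symbols between $a^{(i-1)^3}$ and $a^{i^3}$, so that after phase $n$ the machine has read exactly $n^3$ symbols. Since the gap itself grows quadratically in $i$, a single walk over a tree of linearly-growing size (as in Example~\ref{exa:a2n} and Theorem~\ref{theo:fib-in-twsdnea}) will not by itself produce a quadratic number of moves per phase; so the main design problem is arranging the tree-storage and the traversal so that phase $i$ takes precisely $3i^2-3i+1$ moves.

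The plan is to maintain a tree whose traversal cost grows quadratically. One natural route is to keep, after phase $i$, a tree-storage encoding the integer $i$ in such a way that the machine can both (a) read off the current ``radius'' $i$ by a walk whose length is linear in $i$, and (b) perform a nested double traversal that costs $\Theta(i^2)$. Concretely, I would grow a structure built from two superposed counters: an outer spine of length $\Theta(i)$, and at each outer step an inner sweep of length $\Theta(i)$, giving a product $\Theta(i^2)$; the push operations appending the single new layer of nodes contribute the remaining linear and constant terms. The book-keeping is then to choose the exact shape and the exact stay-moves so that the per-phase move count equals the exact value $3i^2-3i+1$ rather than merely $\Theta(i^2)$, padding with a constant number of no-operation \emph{stay} moves ($d=s$) exactly as the earlier proofs pad with initial and inter-phase delays.

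As in the previous constructions, $M$ reads one input symbol on every move, so real time is automatic: the length consumed equals the number of moves. I would then have $M$ accept exactly at the moment it finishes a phase (plus a finite table of small cases $a^{0},a^{1},a^{8},\dots$ handled by additional states), so that the accepted lengths are precisely the partial sums $\sum_{i=1}^{n}(3i^2-3i+1)=n^3$. Verifying the telescoping identity $\sum_{i=1}^{n}(3i^2-3i+1)=n^3$ and confirming that the traversal plus push moves in phase $i$ total exactly $3i^2-3i+1$ after constant padding is the routine but delicate part, analogous to the edge-counting sum in Theorem~\ref{theo:fib-in-twsdnea}.

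The hard part will be engineering a tree shape together with a deterministic, real-time walk whose per-phase move count is \emph{exactly} the quadratic polynomial $3i^2-3i+1$, not just of the right order; once such a shape and walk are fixed, correctness reduces to the summation identity above and to checking, as in Example~\ref{exa:a2n}, that the finitely many small inputs are accepted by a fixed-size initial gadget.
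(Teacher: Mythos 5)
You have the right skeleton, and for what it's worth the paper itself states Proposition~\ref{prop:a3-in-rttwsda} \emph{without} proof, so there is no official construction to compare against. Your arithmetic is exactly what any proof must use: the telescoping identity $n^3=\sum_{i=1}^{n}(3i^2-3i+1)$, phases in which phase $i$ takes exactly $3i^2-3i+1$ moves, one input symbol read per move (so real time is automatic), acceptance precisely at phase ends, and a finite gadget for the small cases. This mirrors the phase-and-delay pattern of Example~\ref{exa:a2n} and Theorem~\ref{theo:fib-in-twsdnea} faithfully.

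The problem is that everything after that is deferred, and what is deferred is the entire content of the proposition. Two concrete gaps. First, your ``outer spine of length $\Theta(i)$ with an inner sweep of length $\Theta(i)$ at each outer step'' requires the walker to know how many inner sweeps it has completed, but a finite-state tree-walker on an unmarked spine cannot count to $i$; the progress must be recorded in the storage itself. One way is to push a flag leaf (say, a right child) on each spine node as it is processed and pop all flags at the end of the phase --- note this uses the erasing capability, which is plausibly why the statement is about $\twsda$ rather than $\twsdnea$. Alternatively one can avoid counting entirely by maintaining a tree of \emph{quadratic} size (e.g.\ a caterpillar whose spine has length $\Theta(i)$ and whose $j$th spine node carries a pendant path of length $\Theta(j)$), so that a single depth-first traversal per phase, extended by one layer exactly as in the earlier constructions, already costs $\Theta(i^2)$; your proposal names neither mechanism. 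Second, your claim that a \emph{constant} number of stay-moves per phase suffices as padding does not carry over from the earlier proofs: there the discrepancy per phase was constant, whereas a generic $\Theta(i^2)$ traversal differs from the exact target $3i^2-3i+1$ by a term \emph{linear} in $i$, so you must either engineer the quadratic and linear coefficients of the traversal cost to match exactly, or pad with a walk of measured linear length (say, one extra one-way sweep of the spine), which again requires the tree to supply the yardstick. Until a specific shape, its phase invariant, and the exact move-count computation (the analogue of the edge-counting sums $2^{\ell+2}-4$ and $2f_{\ell+3}-4$ in the earlier proofs) are written out, what you have is a credible plan, not a proof.
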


Proposition~\ref{prop:a3-in-rttwsda} and the result in~\cite{ogden:1969:itsl}
yield the following corollary.

\begin{corollary}
There is a language belonging to $\rtf(\twsda)$ 
that does not belong to $\lfam(\dsa)$.
\end{corollary}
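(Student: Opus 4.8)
The plan is to carry over, in a more elaborate form, the phase technique of Example~\ref{exa:a2n} and of the Fibonacci construction in Theorem~\ref{theo:fib-in-twsdnea}. I would build a $\twsdnea$ (which is in particular a $\twsda$) that grows, phase by phase, a family of trees $T_1,T_2,\dots$, reads exactly one input symbol per move, and accepts precisely when the last input symbol is consumed at the end of a phase, with the finitely many smallest cube lengths ($0,1,8,27,\dots$) accepted by direct table look-up in the finite control. Since consecutive cubes satisfy $(k+1)^3-k^3=3k^2+3k+1$, the whole design is driven by a single target: phase~$k$ must consume exactly $3k^2+3k+1$ steps (including a fixed calibration delay), so that the cumulative step count telescopes to $n^3$ exactly when $T_n$ has just been completed.

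As the storage I would use a comb-shaped tree: a spine $\rho_1,\rho_2,\dots$ (say along the left children) in which every spine node $\rho_i$ carries one right subtree that is a simple path, a ``tooth''. As in the two earlier proofs, a phase is a depth-first traversal of the current tree during which new nodes are pushed, arranged so that every edge of the grown tree is walked along exactly twice; hence the number of moves in phase~$k$ equals $2\cdot|E(T_{k+1})|$. The decisive idea is the growth rule, which must be \emph{purely local and counter-free yet must enlarge the tree by an amount growing with $k$}. I would use the uniform rule: during the traversal, extend every existing tooth by a constant number (three) of nodes at its leaf, and, on reaching the bottom of the spine, append one new spine node carrying a fresh tooth of constant length. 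Because $T_k$ already has about $k$ teeth, this automatically adds $3k+O(1)$ nodes per phase, with the number of affected teeth encoded by the shape of the tree rather than by any internal counter.

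A short computation then shows that under this rule $T_k$ has $\frac{3}{2}k(k-1)+O(1)$ edges, so the traversal cost $2\cdot|E(T_{k+1})|$ equals $3k^2+3k-c$ for a fixed constant~$c$; inserting a fixed per-phase delay of $c+1$ stay-moves makes each phase cost exactly $3k^2+3k+1$. Exactly as in the construction of Theorem~\ref{theo:fib-in-twsdnea}, an initial constant delay together with the per-phase constant delay can be tuned so that the running total lands on $n^3$ at the end of every phase, while the first few cube lengths are accepted directly. Reading one symbol per move then gives a real-time acceptor for $L_\subtext{cub}$, and since the machine only pushes and walks it is non-erasing. (The Corollary is then immediate, combining this with the cited result that $L_\subtext{cub}\notin\lfam(\dsa)$.)

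I expect the main obstacle to be exactly the emphasised point of the second paragraph: arranging a deterministic, local growth rule whose per-phase node increment grows with~$k$ while keeping $|E(T_k)|$ a clean quadratic with only a constant (not a linear) deviation from $\frac{3}{2}k(k-1)$. A linear error in $|E(T_k)|$ would make the phase cost overshoot by an amount growing with~$k$, which no constant per-phase delay could repair; the uniform ``extend every tooth by three'' rule is what avoids this, because it spends the required $\Theta(k^2)$ traversal moves on revisiting the $\Theta(k^2)$ nodes already present while performing only constant, counter-free work at each tooth. Pinning down the exact coefficients of the move-count polynomial, and checking that the end of a phase is cleanly detectable (the tree pointer returning to the root from the right), is then routine bookkeeping in the spirit of the two preceding constructions.
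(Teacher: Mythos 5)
Your proposal is correct and follows essentially the paper's route: the corollary is obtained, exactly as in the paper, by combining the real-time acceptability of $L_\subtext{cub}=\{\,a^{n^3}\mid n\geq 0\,\}$ (Proposition~\ref{prop:a3-in-rttwsda}, which the paper states without giving the construction) with Ogden's result that $L_\subtext{cub}\notin\lfam(\dsa)$, and your comb-tree construction is a faithful instance of the paper's own phase technique from Example~\ref{exa:a2n} and Theorem~\ref{theo:fib-in-twsdnea}. The bookkeeping you defer does close as predicted: starting from the bare root and, in each phase, extending every existing tooth by three nodes while appending one new spine node carrying a fresh two-node tooth gives phase cost $2\cdot|E(T_{k+1})|=3k^2+3k$ exactly, so a single stay-move per phase plus one initial delay move makes the running total $(m+1)^3$ at the end of phase $m$, with lengths $0$ and $1$ handled in the finite control.
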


\section{Basic Closure Properties}\label{sec:closure}

The goal of this section is to collect some basic closure properties
of the families $\rtf(\twsda)$ and $\rtf(\twsdnea)$.
In particular, we consider 
Boolean operations (complementation, union, intersection) and AFL operations 
(union, intersection with regular languages, homomorphism, inverse homomorphism, 
concatenation, iteration). 
The results are summarized in Table~\ref{tab:closure} at the end of the section.

It turns out that the two families in question have the same properties and,
in particular, share all but one of these closure properties with the important family
of deterministic context-free languages.

We start by mentioning the only two positive closure properties which 
more or less follow trivially from the definitions.

\begin{proposition}\label{prop:closure-compl}
The families $\rtf(\twsda)$ and $\rtf(\twsdnea)$ are closed under
complementation and intersection with regular languages.
\end{proposition}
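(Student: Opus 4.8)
The plan is to establish the two claimed closure properties separately, since both follow from standard automata-theoretic product-type constructions, and the determinism plus real-time constraints make these constructions go through cleanly.

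First I would handle closure under intersection with regular languages. The idea is the usual product construction: given a $\twsda$ (respectively $\twsdnea$) $M=\langle Q, \Sigma, \Gamma, \delta, q_0, \rightend, \bot, F\rangle$ and a $\dfa$ $A=\langle Q_A, \Sigma, \delta_A, p_0, F_A\rangle$ accepting a regular language $R$, I would build an automaton $M'$ whose state set is $Q\times Q_A$. The second component simulates $A$ in lockstep, advancing exactly when an input symbol is consumed. Here the real-time hypothesis is a convenience rather than an obstacle: since $M$ reads exactly one input symbol per step, the $A$-component can be updated precisely when the first component reads a symbol, and on the endmarker $\rightend$ no symbol of the input remains to be processed by $A$. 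The tree-storage behaviour of $M'$ is inherited verbatim from $M$, so the non-erasing property is preserved if $M$ was non-erasing. Acceptance is defined by the set $F\times F_A$, and $M'$ accepts $w$ iff $M$ accepts $w$ and $A$ accepts $w$, giving $L(M')=L(M)\cap R$.

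Next I would treat closure under complementation. For deterministic real-time devices this should follow from swapping accepting and rejecting states, but the subtlety lies in the acceptance condition stated in the preliminaries: a word is accepted only if the machine halts in an accepting state \emph{after having read the input entirely}. So I would first argue that a real-time $\twsda$ can be put into a normal form in which, on every input, it reads all of $w\rightend$ and then halts, reaching a well-defined final state, never blocking prematurely and never leaving input unread. Since the device is deterministic and real-time (its transition function is undefined for $\lambda$ input, so each step consumes a symbol and a computation of length $|w|+1$ is forced), the main task is to ensure totality of $\delta$ on the relevant configurations; one completes $\delta$ by routing undefined transitions to a dedicated rejecting sink state that simply consumes the rest of the input. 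After this normal form is in place, I would complement by taking $F'=Q\setminus F$ on the final configuration, which flips acceptance since the machine always halts in exactly one state after reading $w\rightend$.

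The main obstacle I expect is precisely the bookkeeping around the endmarker and the halting convention in the complementation argument: one must be careful that ``halts in an accepting state after reading the input entirely'' is faithfully negated, and that introducing a sink state does not inadvertently change whether the whole input is read. This is where the real-time restriction genuinely helps, because it removes any worry about non-halting $\lambda$-loops that would otherwise complicate complementation of deterministic machines; with every step consuming an input symbol, the computation on $w\rightend$ is finite and ends in a unique configuration, so complementation reduces to the clean state-swap. Both constructions preserve the non-erasing restriction, so the statement holds uniformly for $\rtf(\twsda)$ and $\rtf(\twsdnea)$.
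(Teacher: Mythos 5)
Your proposal is correct and takes essentially the same approach as the paper: the paper likewise proves intersection with regular languages by simulating a deterministic finite automaton in the states (the standard product construction), and complementation by noting that real-time machines always halt, letting an extra (sink) state read any remaining input before interchanging accepting and non-accepting states. Your version merely spells out the totalization and endmarker bookkeeping that the paper's terser argument leaves implicit.
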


\begin{proof}
For acceptance it is required that the tree-walking-storage automata halt
accepting after having read the input entirely. Due to the real-time
requirement the machines halt in any case. Should this happen somewhere
in the input, the remaining input can be read in an extra state. So,
interchanging accepting and non-accepting states is sufficient to 
accept the complement of a language.

For the intersection with regular languages, it is enough to simulate
a deterministic finite automaton in the states which is a standard
construction for automata.
\end{proof}

In order to prepare for further (non-)closure properties, we now tweak
the language $L_h$ of Section~\ref{sec:comp-cap} and define
\begin{multline*}
L_p=\{\,x_1 \dollar^{|x_1|} x_2\dollar^{|x_2|} \cdots x_k\dollar^{|x_k|} \border y \mid 
k\geq 0, x_i \in \{a,b\}^*, 1\leq i\leq k,\\ \text{ no } x_i \text{ is proper prefix 
of } x_j, \text{ for } 1\leq j< i, \text{ and there exists } m \text{
  such that } x_m = y\,\}.
\end{multline*}

These little changes have a big impact. The language becomes now
real-time acceptable by some $\twsdnea$ $M$.
The basic idea of the construction of $M$ is that it can accept $L_p$ by
building a trie from $x_1,x_2,\dots, x_k$, observing that the $\dollar$
padding allows it to return to the root between each part, and then on 
encountering $\border$ it matches $y$ to the trie.

\begin{theorem}\label{theo:tweaked-dict-in-rttwsdnea}
The language $L_p$ is accepted by some $\twsdnea$ in real time.
\end{theorem}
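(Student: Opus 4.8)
The plan is to realize the informal description by maintaining a binary trie of the words $x_1,\dots,x_k$ in the tree storage, with the convention that from a node the left descendant continues with the letter $a$ and the right descendant with the letter $b$. Reading a letter of some $x_i$ becomes a single tree step: if the appropriate descendant already exists, the tree pointer moves down to it, and otherwise it is created by $\pdpush$; in either case the pointer descends one level, and the choice is determined by the type of the current node. Since exactly $|x_i|$ padding symbols $\dollar$ follow $x_i$, I let $M$ answer every $\dollar$ by one up-move of the tree pointer, so that each block $x_i\dollar^{|x_i|}$ leaves the pointer at the root again, and the next block starts its descent from the root. Because the root carries the distinguished symbol $\bot$ and a distinguished type, $M$ recognizes when it is back at the root, which it uses to switch modes on $\border$. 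On reading $\border$ the pointer is at the root, matching $y$ is the same downward walk, and on the final symbol $\rightend$ the automaton accepts if and only if the node reached for $y$ carries a word-end label, a missing descendant along the way sending $M$ into a rejecting sink state. One tree step is taken per input symbol, so $M$ runs in real time.

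The crux, and the step I expect to be the main obstacle, is marking the node that represents a complete word $x_i$ as a \emph{word-end}, although $M$ is non-erasing and may never relabel a node. The idea is that $M$ learns that a letter was the last letter of $x_i$ exactly one step late, namely when the following $\dollar$ is read; I implement this one-symbol lookahead by keeping the current input letter in the finite control and performing its tree step only on the next move. The descent for the last letter of $x_i$ is thus carried out while reading the first $\dollar$, at which moment $M$ already knows it is the last letter and can $\pdpush$ the endpoint with a word-end symbol rather than with the ordinary interior symbol. Conveniently, this one-step delay also fixes the synchronization: the up-move for the last $\dollar$ of a block is deferred to the reading of the first symbol of the next block, so that $M$ sits at the root precisely when the descent for $x_{i+1}$ begins, and no up-move is ever attempted at the root.

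Here the hypothesis that no $x_i$ is a proper prefix of an earlier $x_j$ becomes essential. It guarantees that when $M$ reaches the endpoint of the block $x_i$, this node is either freshly created or a pre-existing \emph{leaf}: an already existing descendant below it would mean that some earlier word extended $x_i$, that is, that $x_i$ is a proper prefix of some $x_j$ with $j<i$. Moreover, interior nodes always acquire a descendant within the very block that created them and so are never leaves, whence every leaf on a word path is a word-end created by an equal earlier block. This turns the prefix restriction into a local, testable condition: whenever $M$ reaches the endpoint of $x_i$ and finds by its type that it already has a descendant, the input violates the restriction and $M$ rejects. Inputs respecting the restriction are therefore exactly those for which every word-end label has been set correctly, and the final test on $\rightend$ then coincides with the condition that $y$ equals some $x_m$; conversely, an input that satisfies $x_m=y$ but violates the prefix restriction is rejected at the offending block, which is exactly what is needed.

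Two routine points remain. The tree alphabet needs only the two symbols for interior and word-end nodes, so that while matching $y$ the automaton can tell a genuine word-end (which may already have become interior by a later extending word) from a true interior node; and the degenerate blocks with $x_i=\lambda$, invisible in the input, together with the empty $y$, are absorbed by a constant number of extra states without affecting the real-time bound.
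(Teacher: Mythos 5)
Your proposal is correct and takes essentially the same route as the paper, which for Theorem~\ref{theo:tweaked-dict-in-rttwsdnea} only sketches the idea (build a trie from the $x_i$, use the $\dollar$ padding to return to the root between blocks, then match $y$ against the trie): your construction realizes exactly this sketch, and moreover supplies the two details the paper leaves implicit, namely marking word-end nodes via a one-symbol delay -- unavoidable since a non-erasing device can never relabel a node -- and using the prefix condition to guarantee that an endpoint reached again is either a fresh push, an identically labeled leaf, or a detectable violation. One small repair: the one-symbol delay should be dropped during the matching phase (process each letter of $y$ on the step in which it is read), for if the delay were kept globally, the transition consuming $\rightend$ would have to perform the final downward move and simultaneously decide acceptance before the label of the target node is visible, whereas with undelayed matching the tree pointer already sits on the endpoint of $y$ when $\rightend$ is read and its label can be inspected directly.
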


The construction in the proof of Theorem~\ref{theo:tweaked-dict-in-rttwsdnea}
can straightforwardly be extended to show that the following language
$\hat{L}_p$ is also accepted by some $\twsdnea$ in real time.
\begin{multline*}
\hat{L}_p=\{\,x_1 \dollar^{|x_1|} x_2\dollar^{|x_2|} \cdots x_k\dollar^{|x_k|}
\cent z \border_1 y \mid 
k\geq 0, x_i \in \{a,b\}^*, 1\leq i\leq k, z\in\{a,b,\dollar\}^*\\ 
\text{ no } x_i \text{ is proper prefix 
of } x_j, \text{ for } 1\leq j< i, \text{ and there exists } m \text{
  such that } x_m = y\,\}.
\end{multline*}

The language
$$
\hat{L}_\subtext{mi} = \{\, x \cent v \dollar v^R \border_2 \mid 
x\in \{a,b,\dollar\}^*, v\in \{a,b\}^*\,\}
$$
is accepted by some deterministic pushdown automaton in real time. Therefore,
it is accepted by some real-time $\twsdnea$ as well.

The proof of the next Proposition first shows the non-closure under union. 
Then the non-closure under intersection 
follows from the closure under complementation by De Morgan's law.
A witness for the non-closure under union is 
$L=\hat{L}_p\cup \hat{L}_\subtext{mi}$.
No real-time $\twsda$ can accept $L$ as any deterministic automaton would have 
to represent a tree with arbitrary height representing a potential $v$ from 
$\hat{L}_\subtext{mi}$, which makes it impossible for it to reach whatever 
representation it has built of $x_1, x_2,\dots, x_k$
if it turns out to be trying to accept $\hat{L}_p$.

\begin{proposition}\label{prop:nonclosure-union}
The families $\rtf(\twsda)$ and $\rtf(\twsdnea)$ are neither closed under
union nor under intersection.
\end{proposition}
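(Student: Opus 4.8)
The plan is to establish non-closure under union first, and then to deduce non-closure under intersection formally from the already proved closure under complementation. Since $\rtf(\twsdnea)\subseteq\rtf(\twsda)$ and both operands $\hat{L}_p$ and $\hat{L}_\subtext{mi}$ have been shown to lie in $\rtf(\twsdnea)$, it suffices to produce one union $L=\hat{L}_p\cup\hat{L}_\subtext{mi}$ that fails to be in the larger family $\rtf(\twsda)$: because the operands lie in $\rtf(\twsdnea)\subseteq\rtf(\twsda)$ while $L\notin\rtf(\twsda)\supseteq\rtf(\twsdnea)$, the single witness $L$ rules out union-closure for both families simultaneously.

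To show $L\notin\rtf(\twsda)$ I would argue by contradiction, assuming a real-time $\twsda$ $M$ accepts $L$, and exploit that one prefix can be driven into either sublanguage. Fix a code $c=x_1\dollar^{|x_1|}\cdots x_k\dollar^{|x_k|}$ of a valid word set; as an element of $\{a,b,\dollar\}^*$ it also serves as the free part $x$ of $\hat{L}_\subtext{mi}$. For each $v\in\{a,b\}^{n}$ consider the prefix $c\cent v$ and the configuration $\kappa_v$ it reaches. Continuing with $\dollar v^R\border_2$ lands in $\hat{L}_\subtext{mi}$ (accept), whereas $\dollar u^R\border_2$ with $u\neq v$ lands outside $L$ (reject, the distinct endmarker $\border_2$ keeping it out of $\hat{L}_p$); continuing instead with $\border_1 y$, reading $v$ as the padding $z$ and $y=x_m$, lands in $\hat{L}_p$ (accept), while $\border_1 y'$ with $y'$ absent from the code is rejected. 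Thus the single configuration $\kappa_v$ must support two incompatible jobs: a reversal test on an arbitrarily long $v$, and a membership test against the image $M$ stored while reading $c$.

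The heart of the matter is a structural dichotomy for $\kappa_v$. For the $\hat{L}_\subtext{mi}$ branch, $M$ sees at each real-time step only its state, the label of the current node, and the incoming symbol; recognising $u^R=v^R$ for every $u$ forces it, step by step, to move to a fresh adjacent node whose label fixes the next symbol of $v$ in reverse, so the visited nodes form a simple path of length $n$ whose deep end is the current node of $\kappa_v$. Since this path is created only while $v$ is read, it hangs off the position reached after $c\cent$, placing the current node at tree-distance $\Omega(n)$ from everything stored from $c$ — exactly the ``tree of arbitrary height'' that buries the stored image of $x_1,\dots,x_k$. For the $\hat{L}_p$ branch, $M$ must, after $\border_1$, decide $y\in\{x_1,\dots,x_k\}$; as $M$ has finitely many states while $c$ may encode any of doubly exponentially many word sets, it cannot answer from its state alone and must bring the pointer to the stored image within the $|y|+1$ available real-time steps. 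Taking $n$ far larger than $|y|$ makes the two demands contradictory. I expect the main obstacle to be the first horn: rigorously showing that real-time reversal admits no clever encoding — no balanced tree read by an Euler tour, no reuse of the $c$-structure, no erasing tricks — and genuinely forces a height-$\Omega(n)$ path ending at the current node. A pure counting bound in the style of Lemma~\ref{lem:rttwsca-equivalence-classes} cannot close this gap, precisely because each of $\hat{L}_p$ and $\hat{L}_\subtext{mi}$ is real-time acceptable and hence individually respects that bound; the argument must instead exploit that one tree pointer cannot sequentially serve both tasks in real time.

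Finally, non-closure under intersection follows with no further combinatorics. By Proposition~\ref{prop:closure-compl} both families are closed under complementation, so if either were also closed under intersection, then De Morgan's identity $\hat{L}_p\cup\hat{L}_\subtext{mi}=\overline{\overline{\hat{L}_p}\cap\overline{\hat{L}_\subtext{mi}}}$ would place the union back inside the family, contradicting the non-closure under union just established.
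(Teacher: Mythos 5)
Your proposal matches the paper's own argument essentially point for point: the same witness $L=\hat{L}_p\cup \hat{L}_\subtext{mi}$, the same structural dichotomy (verifying $v\dollar v^R\border_2$ drives the pointer into a tall structure built from $v$, leaving the stored representation of $x_1,\dots,x_k$ out of real-time reach when the input continues with $\border_1 y$), and the same reduction of intersection to union via closure under complementation (Proposition~\ref{prop:closure-compl}) and De Morgan. The rigor gap you honestly flag in the first horn — ruling out balanced-tree or Euler-tour encodings of $v$ — is left equally informal in the paper, which presents this argument only as a sketch in the paragraph preceding the proposition, and your remark that the counting bound of Lemma~\ref{lem:rttwsca-equivalence-classes} cannot close that gap (since each of $\hat{L}_p$ and $\hat{L}_\subtext{mi}$ individually satisfies it) is correct.
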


We turn to the catenation operations.

\begin{proposition}\label{prop:nonclosure-catenation}
The families $\rtf(\twsda)$ and $\rtf(\twsdnea)$ are neither closed under
concatenation nor under iteration.
\end{proposition}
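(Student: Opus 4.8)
The plan is to prove non-closure under both concatenation and iteration by exhibiting simple witness languages whose combination forces a real-time $\twsda$ into the same impossible situation already exploited in the proof of Proposition~\ref{prop:nonclosure-union}. The key obstruction there is that a device cannot simultaneously maintain a trie of arbitrary width (to recognize the dictionary part $\hat{L}_p$) and a stack of arbitrary height (to recognize the mirror part $\hat{L}_\subtext{mi}$), because after building a tall thin tree for a potential $v$ it can no longer reach the representation it built for $x_1,\dots,x_k$. My strategy is to package $\hat{L}_p$ and $\hat{L}_\subtext{mi}$ as \emph{factors} of a single word, separated by a fresh marker, so that a language accepted via concatenation (respectively iteration) contains exactly words of the form ``$\hat{L}_p$-part followed by $\hat{L}_\subtext{mi}$-part,'' and then show this composite language cannot be real-time accepted.

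First I would fix disjoint marker alphabets so the two factors are unambiguously delimited. Concretely, I would take $L_1=\hat{L}_p$ and $L_2=\hat{L}_\subtext{mi}$ (each real-time $\twsdnea$-acceptable, as noted before the statement), choose a separator not occurring in either, and argue that $L_1 \cdot L_2$ — words consisting of a valid $\hat{L}_p$ factor immediately followed by a valid $\hat{L}_\subtext{mi}$ factor — is not in $\rtf(\twsda)$. The point is that unique delimiters make the factorization of any accepted word forced, so a hypothetical real-time $\twsda$ for $L_1\cdot L_2$ must effectively verify both membership conditions on a single left-to-right pass. The $\hat{L}_\subtext{mi}$ factor compels the device to push a tree of height proportional to $|v|$ to check $v$ against $v^R$, and in real time it cannot also keep accessible the trie it would need for the $\hat{L}_p$ factor's final matching of $y$ against $x_1,\dots,x_k$. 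This is the same information-theoretic conflict as in Proposition~\ref{prop:nonclosure-union}, now internalized within one word, so I would lift that argument essentially verbatim, or alternatively invoke Lemma~\ref{lem:rttwsca-equivalence-classes} directly by counting equivalence classes induced by the dictionary prefixes.

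For iteration I would reuse the same building blocks but arrange them so that the Kleene star of a single real-time $\twsda$ language is forced to produce exactly the problematic two-factor shape. The cleanest route is to pick a witness $L$ that is already real-time acceptable and for which $L^*$ contains, as its length-two concatenations, precisely words combining a dictionary segment with a mirror segment, while the markers prevent any word of $L^*$ from being read as a different number of factors. Then $L^*\notin\rtf(\twsda)$ follows from the same obstruction. A convenient design is to let each element of $L$ carry its own opening and closing markers so that concatenation of two elements is recognizable as such, and to have $L$ contain both a dictionary-type word and a mirror-type word; the two-fold concatenations then reproduce $L_1\cdot L_2$ up to harmless marker bookkeeping.

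The main obstacle I anticipate is not the high-level conflict — that is inherited — but the \emph{bookkeeping of markers and factorization uniqueness}. I must design the separators so that (i) each individual factor language stays real-time $\twsdnea$-acceptable, and (ii) no accepted composite word admits a spurious alternative factorization that would let the automaton evade the height-versus-width dilemma (for instance by treating the whole input as a single mirror-type word). Once unique factorization is nailed down, the non-acceptability argument is a direct transcription of the reasoning preceding Proposition~\ref{prop:nonclosure-union}: any real-time $\twsda$ that has committed to a tall narrow tree for the mirror check has destroyed its access to the trie, yielding a contradiction. I would therefore spend the bulk of the proof making the marker conventions precise and verifying forced factorization, and keep the impossibility core short by referring back to the established conflict.
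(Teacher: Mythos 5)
Your central claim---that packaging $\hat{L}_p$ and $\hat{L}_\subtext{mi}$ as consecutive factors with a fresh separator and \emph{forced, unique} factorization yields a language outside $\rtf(\twsda)$---is false, and this sinks the whole plan. The obstruction behind Proposition~\ref{prop:nonclosure-union} is not an ``information-theoretic conflict'' that can be internalized within one word; it is the device's \emph{uncertainty}, while reading a prefix $x\cent v$ that is consistent with both languages, about whether the continuation will demand the trie (a $\border_1 y$ suffix, pointer needed near the trie) or the mirror check (a $\dollar v^R\border_2$ suffix, pointer necessarily deep in a chain of height $|v|$). Once you fix the factor order and announce the boundary with a unique marker, that uncertainty evaporates: a real-time $\twsdnea$ simply completes the trie check on the first factor (the matching of $y$ against $x_1,\dots,x_k$ ends \emph{before} your separator appears), then on reading the separator pushes a specially labeled node below its current position as a pseudo-root and runs the mirror check there, pushing a fresh chain for $v$ and walking back up comparing $v^R$. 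The two storage structures are needed sequentially, never simultaneously, so your composite language is in fact real-time acceptable. Your fallback---counting $\ell$-equivalence classes via Lemma~\ref{lem:rttwsca-equivalence-classes}---also cannot work: that count exceeded the bound for $L_h$ only because the compressing homomorphism $h$ makes the suffix $y$ have length $\ell$ while encoding $2\ell$ bits; for $\hat{L}_p$-style languages with full-length suffixes the trie construction keeps the class count within the lemma's bound, as witnessed by their acceptability.

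The paper's proof goes in exactly the opposite direction: it \emph{engineers ambiguity of factorization} rather than excluding it. It sets $L_1=\bullet\hat{L}_p\cup\hat{L}_\subtext{mi}$ (the hint $\bullet$ makes $L_1$ real-time acceptable) and $L_2=\{\bullet,\bullet\bullet\}$. In $L_2\cdot L_1$, a word beginning $\bullet\bullet$ with no further $\bullet$ admits two potential factorizations, $\bullet\cdot(\bullet w)$ with $w\in\hat{L}_p$ or $\bullet\bullet\cdot w$ with $w\in\hat{L}_\subtext{mi}$, so after the two hint symbols the device is back in the union dilemma. Formally, one intersects $L_2\cdot L_1$ with the regular language $\bullet\bullet\{a,b,\dollar,\cent,\border_1,\border_2\}^*$ (closure under intersection with regular languages, Proposition~\ref{prop:closure-compl}) to obtain $\bullet\bullet(\hat{L}_p\cup\hat{L}_\subtext{mi})$, and then applies closure under left quotient by a singleton to contradict $\hat{L}_p\cup\hat{L}_\subtext{mi}\notin\rtf(\twsda)$. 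Iteration is handled the same way via $(L_1\cup L_2)^*\cap\bullet\bullet\{a,b,\dollar,\cent,\border_1,\border_2\}^+$. If you want to repair your write-up, replace the unique-factorization design by such an ambiguous one and reduce to the union result through these closure properties instead of re-proving an impossibility from scratch.
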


\begin{proof}
To make the language $\hat{L}_p\cup \hat{L}_\subtext{mi}$ more manageable we 
add a hint to the left of the words. So, let $\bullet$ be a new symbol
and set $L_1= \bullet\hat{L}_p\cup \hat{L}_\subtext{mi}$. Since
$\hat{L}_p$ and $\hat{L}_\subtext{mi}$ do belong to $\lfam(\twsdnea)$,
$L_1$ is accepted by some real-time $\twsdnea$ as well.
The second language used here is the finite language $L_2=\{\bullet,
\bullet\bullet\}$ that certainly also belongs to $\lfam(\twsdnea)$.

We consider the concatenation $L_2\cdot L_1$ and assume that it belongs to
$\lfam(\twsda)$. Since $\lfam(\twsda)$ is closed under intersection with
regular languages, 
$(L_2\cdot L_1)\cap \bullet\bullet\{a,b,\dollar,\cent,\border_1,\border_2\}^*
= \bullet\bullet (\hat{L}_p\cup \hat{L}_\subtext{mi})$
belongs to $\lfam(\twsda)$. Since $\lfam(\twsda)$ is straightforwardly
closed under left quotient by a singleton, we obtain
$\hat{L}_p\cup \hat{L}_\subtext{mi} \in \lfam(\twsda)$, a contradiction.

\begin{sloppypar}
The non-closure under iteration follows similarly. Since $L_2$ is regular,
we derive that \mbox{$L_1\cup L_2\in \lfam(\twsda)$.} However,
$(L_1\cup L_2)^* \cap
\bullet\bullet\{a,b,\dollar,\cent,\border_1,\border_2\}^+$
equals again $\bullet\bullet (\hat{L}_p\cup \hat{L}_\subtext{mi})$. So,
as for the concatenation we obtain a contradiction to the assumption that
$\lfam(\twsda)$ is closed under iteration.
\end{sloppypar}
\end{proof}

\begin{proposition}\label{prop:nonclosure-hom}
The families $\rtf(\twsda)$ and $\rtf(\twsdnea)$ are not closed under
length-preserving homomorphisms.
\end{proposition}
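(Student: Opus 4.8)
The plan is to find a language $\hat{L}$ that is accepted by some $\twsdnea$ in real time, together with a length-preserving homomorphism $g$ such that $g(\hat{L})=L_h$. Since $L_h\notin\rtf(\twsda)$ by Theorem~\ref{theo:lindet-notin-rttwsda} and $\rtf(\twsdnea)\subseteq\rtf(\twsda)$, the image $g(\hat{L})$ then belongs to neither family, while the preimage $\hat{L}$ belongs to both; this witnesses the non-closure of both families simultaneously under a single construction.

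The feature of $L_h$ that keeps it out of $\rtf(\twsda)$ is that a real-time device cannot know which of the blocks $x_1,\dots,x_k$ will be the matching one, so it would have to account for all of them at once; this is exactly what the equivalence-class count in Theorem~\ref{theo:lindet-notin-rttwsda} exploits. The idea is therefore to hand the automaton a hint. In $\hat{L}$ exactly one block is distinguished by writing it over a primed copy $\{a',b'\}$ of $\{a,b\}$, and the defining condition requires precisely that this marked block, with primes removed, reverses to $h(y)$. Choosing $g$ to be the length-preserving homomorphism that maps $a'\mapsto a$, $b'\mapsto b$ and fixes every other symbol, one gets $g(\hat{L})=L_h$: a word of $L_h$ lies in the image if and only if some marking of a witnessing block puts it into $\hat{L}$, which is possible exactly when at least one block matches.

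It then remains to show $\hat{L}\in\rtf(\twsdnea)$. First I would read across the input, ignoring unmarked blocks while the tree pointer rests at the root, until the unique primed block is detected by its first primed symbol. The main subtlety is that the matching condition $x_j^R=h(y)$ is length-doubling: the marked block has length $2|y|$, so storing it one symbol per tree node would force two upward pointer moves per symbol of $y$ and thereby break real time. To avoid this I would store the marked block on a single root-path using paired labels from $\Gamma=\{a,b\}\times\{a,b\}$, pushing one node for every two block symbols (reading the first symbol with a stay move while remembering it in the state, and the second with a $\pdpush$); the resulting path has exactly $|y|$ nodes. On reading $\border$ the pointer sits at the leaf, and while reading $y$ symbol by symbol the automaton checks that $h$ of the current input symbol equals the reversed pair stored at the current node and then walks one step up, so that exactly one input symbol is consumed per pointer move. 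Acceptance occurs if and only if $y$ is exhausted precisely as the pointer returns to the root with all checks satisfied. Only pushes and upward walks are used, so the automaton is non-erasing, and it consumes one input symbol at every step, so it runs in real time.

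I expect the verification that this pair-encoded walk matches $h(y)$ against $x_j^R$ in lockstep to be the main technical point; the determinism (enforcing a single marked block and rejecting blocks of odd length, which can never satisfy the condition) and the routing between the ignoring, pushing, and matching phases are then routine finite-state bookkeeping.
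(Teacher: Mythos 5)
Your proof is correct, and it takes a genuinely different route from the paper's. The paper recycles its union witness: it sets $L=\bullet_1\hat{L}_p\cup\bullet_2\hat{L}_\subtext{mi}$, where the two distinct hint symbols make the union deterministically acceptable in real time, and then collapses $\bullet_1,\bullet_2\mapsto\bullet$, so that the image is $\bullet(\hat{L}_p\cup\hat{L}_\subtext{mi})$, which is not in $\rtf(\twsda)$ by (the argument for) Proposition~\ref{prop:nonclosure-union}. You instead put the hint \emph{inside} the word: priming the witness block of $L_h$ and unpriming it with $g$, so that the hard language of Theorem~\ref{theo:lindet-notin-rttwsda} itself (essentially) becomes the image. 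Both proofs instantiate the same principle --- length-preserving homomorphisms can erase hints --- but the paper's marker sits at the word boundary and lets it reuse Theorem~\ref{theo:tweaked-dict-in-rttwsdnea} and Proposition~\ref{prop:nonclosure-union} off the shelf, whereas yours needs a fresh membership construction for $\hat{L}$. Your construction is sound, and its key observation is exactly right: since $h$ doubles lengths, storing the marked block one symbol per node would force two upward moves per symbol of $y$, which real time forbids, so the pair-compressed labels from $\{a,b\}\times\{a,b\}$ are necessary; with them the match of $x_j^R$ against $h(y)$ consumes one input symbol per pointer move. A pleasant byproduct of marking a single block is that you store only that block, so you need neither the $\dollar^{|x_i|}$ padding nor the prefix-freeness condition that the paper's $L_p$-type languages carry to make trie-building work; your $\hat{L}$ is in fact closer in spirit to the primed language $\tilde{L}_p$ and the unpriming map $h_1$ that the paper deploys later against inverse homomorphisms (Proposition~\ref{prop:nonclosure-invhom}).

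One wrinkle you should patch: blocks may be empty, and an empty witness block cannot be primed. If $y=\lambda$, the only possible witnesses are empty blocks (e.g., $ab\dollar\border\in L_h$ via $x_2=\lambda$), and such words have no $g$-preimage in your $\hat{L}$, so strictly $g(\hat{L})\subsetneq L_h$. Two easy fixes: either enlarge $\hat{L}$ by the regular, prime-free clause ``$y=\lambda$ and some block is empty,'' which your automaton can track in its finite control in parallel with everything else, restoring $g(\hat{L})=L_h$ exactly; or observe that equality is not needed, since the counting argument behind Theorem~\ref{theo:lindet-notin-rttwsda} only uses words with $|y|=\ell\geq 1$ and nonempty witness blocks, all of which do lie in $g(\hat{L})$, while $g(\hat{L})\subseteq L_h$ keeps the negative instances out --- so $g(\hat{L})\notin\rtf(\twsda)$ as it stands. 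With either repair, your argument goes through.
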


\begin{proof}
The idea to show the non-closure is first to provide some hint that allows
a language to be accepted, and then to make the hint worthless by applying
a homomorphism.

So, let us provide a hint that makes the language $\hat{L}_p\cup
\hat{L}_\subtext{mi}$ acceptable by some real-time $\twsdnea$.
We use two new symbols $\bullet_1$ and $\bullet_2$
and set
$L= \bullet_1\hat{L}_p\cup \bullet_2\hat{L}_\subtext{mi}$. 
In this way, $L$ belongs to $\rtf(\twsdnea)$.
However applying the homomorphism
$h\colon \{a,b,\dollar,\cent,\border_1,\border_2,\bullet_1,\bullet_2\}^*
\to \{a,b,\dollar,\cent,\border_1,\border_2,\bullet\}^*$, that maps
$\bullet_1$ and $\bullet_2$ to $\bullet$ and all other symbols to itself,
to language $L$ yields
$h(L)= \bullet(\hat{L}_p\cup \hat{L}_\subtext{mi})$ which does not belong to
 $\rtf(\twsda)$.
\end{proof}

\begin{proposition}\label{prop:nonclosure-invhom}
The families $\rtf(\twsda)$ and $\rtf(\twsdnea)$ are not closed under
inverse homomorphisms.
\end{proposition}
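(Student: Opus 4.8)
The plan is to reduce the statement to the equivalence-class technique of Lemma~\ref{lem:rttwsca-equivalence-classes}, by exhibiting a language $A$ that already lies in $\rtf(\twsdnea)$ together with a homomorphism $g$ for which $g^{-1}(A)$, after intersection with a regular language, is provably \emph{not} accepted by any real-time $\twsda$. Since by Proposition~\ref{prop:closure-compl} both families are closed under intersection with regular languages, closure under inverse homomorphisms would then force this hard language into the family, a contradiction. Concretely, I would take the ``easy'' forward-matching dictionary language $A=\hat L_p\in\rtf(\twsdnea)$ and pre-compose with the length-doubling code $h\colon\{\alpha_0,\alpha_1,\alpha_2,\alpha_3\}^*\to\{a,b\}^*$ from the definition of $L_h$, so that the \emph{query} part of a word must be read in compressed form. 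Reading one compressed query symbol would force the automaton to descend two trie levels in a single step, which real time forbids; the resulting blow-up is what makes the preimage hard.

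First I would fix the over-alphabet $\Omega=\{a,b,\dollar,\cent,\border_1,\alpha_0,\alpha_1,\alpha_2,\alpha_3\}$ and define $g\colon\Omega^*\to\{a,b,\dollar,\cent,\border_1\}^*$ by $g(\alpha_i)=h(\alpha_i)$ for $0\le i\le 3$ and $g(c)=c$ for every other symbol $c$. Next I would set $R=\{a,b,\dollar\}^*\cent\{a,b,\dollar\}^*\border_1\{\alpha_0,\alpha_1,\alpha_2,\alpha_3\}^*$, a regular language that forces the part up to $\border_1$ to be untouched by $g$ and the query part after $\border_1$ to consist solely of the compressed symbols $\alpha_i$. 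A direct unfolding of the definitions then gives
$$
B:=g^{-1}(\hat L_p)\cap R=\{\,x_1\dollar^{|x_1|}\cdots x_k\dollar^{|x_k|}\cent z\border_1\beta \mid \beta\in\{\alpha_0,\dots,\alpha_3\}^*,\ \exists m\colon x_m=h(\beta)\,\},
$$
keeping the prefix-free condition on the $x_i$ (no $x_i$ a proper prefix of $x_j$ for $j<i$) and $z\in\{a,b,\dollar\}^*$.

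The heart of the argument is the lower bound $E(B,\ell)\ge 2^{2^{2\ell}}$, proved exactly as in Theorem~\ref{theo:lindet-notin-rttwsda}. For every subset $P\subseteq\{a,b\}^{2\ell}$ I would list its elements as $v_1,\dots,v_r$ and form the prefix $w_P=v_1\dollar^{2\ell}\cdots v_r\dollar^{2\ell}\cent\border_1$ (taking $z=\lambda$); all $v_i$ have the same length $2\ell$, so the prefix-free condition holds automatically. Since $h$ restricts to a bijection $\{\alpha_0,\dots,\alpha_3\}^\ell\to\{a,b\}^{2\ell}$, every $u\in\{a,b\}^{2\ell}$ equals $h(\beta)$ for a unique $\beta$ of length $\ell$; appending this $\beta$ (a word within the window $|u|\le\ell$ of the relation) gives $w_P\beta\in B$ if and only if $u\in P$. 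Hence distinct subsets yield non-$\ell$-equivalent prefixes, which produces the claimed $2^{2^{2\ell}}$ classes. As $B$ is infinite, choosing $\ell$ with $2^\ell>p$ makes $2^{2^{2\ell}}>2^{p\cdot2^\ell}$, so Lemma~\ref{lem:rttwsca-equivalence-classes} shows $B\notin\rtf(\twsda)$, and a fortiori $B\notin\rtf(\twsdnea)$.

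It remains to assemble the contradiction. Since $\hat L_p\in\rtf(\twsdnea)\subseteq\rtf(\twsda)$, closure of either family under inverse homomorphisms would place $g^{-1}(\hat L_p)$ in that family, and then Proposition~\ref{prop:closure-compl} would place $B=g^{-1}(\hat L_p)\cap R$ there as well, contradicting $B\notin\rtf(\twsda)$. The step I expect to require the most care is choosing the source language correctly: one cannot simply invert $h$ to recover the reversal-based $L_h$, because no homomorphism can undo the reversal in its matching condition. The trick is to use the \emph{forward}-matching $\hat L_p$, whose easiness survives the padding, and to let the inverse homomorphism act only on the query, so that the $2$-to-$1$ code boosts the number of distinguishing classes from the harmless $2^{2^\ell}$ up to the prohibitive $2^{2^{2\ell}}$ while keeping the whole construction inside the already-established closure properties.
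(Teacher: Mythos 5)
Your proposal is correct and follows essentially the same route as the paper: the paper's own proof takes the forward-matching, prefix-padded trie language $\tilde{L}_p$ (a variant of $L_p$ whose query is written over a primed alphabet $\{a',b'\}$) and the homomorphism $h_2$ sending each $\alpha_i$ to a primed pair, so that $h_2^{-1}(\tilde{L}_p)$ is \emph{exactly} the compressed-query language $\tilde{L}_h$, whose non-membership in $\rtf(\twsda)$ follows by the same $2^{2^{2\ell}}$ equivalence-class count as in Theorem~\ref{theo:lindet-notin-rttwsda}. The only difference is technical rather than conceptual: by making the query alphabet disjoint from $\{a,b\}$ the paper gets the hard language as an exact preimage with no spurious decodings, whereas you map $\alpha_i$ to unprimed pairs and must therefore cut the junk out of $g^{-1}(\hat{L}_p)$ with the regular language $R$ and Proposition~\ref{prop:closure-compl} -- a valid extra step that the paper's priming trick simply avoids.
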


\begin{proof}
Previously, we have taken the language $L_h\notin \rtf(\twsda)$ and tweaked
it to $L_p\in \rtf(\twsdnea)$.
Now we merge both languages to
\begin{multline*}
\tilde{L}_h=\{\,x_1 \dollar^{|x_1|} x_2\dollar^{|x_2|} \cdots
x_k\dollar^{|x_k|} \border y \mid 
k\geq 0, x_i \in \{a,b\}^*, 1\leq i\leq k,\\ \text{ no } x_i \text{ is proper prefix 
of } x_j, \text{ for } 1\leq j< i, \text{ and there exists } m \text{
  such that } x_m = h(y)\,\},
\end{multline*}
where 
$h\colon \{\alpha_0,\alpha_1,\alpha_2,\alpha_3\}^*\to \{a,b\}^*$ is defined as
$h(\alpha_0)=aa$, $h(\alpha_1)=ab$, $h(\alpha_2)=ba$, and $h(\alpha_3)=bb$. 
The main ingredients to show that $L_h\notin \rtf(\twsda)$
(Theorem~\ref{theo:lindet-notin-rttwsda}) are kept such that
$\tilde{L}_h\notin \rtf(\twsda)$ immediately follows. 

Similarly, if we require that $y\in\{a',b'\}$ has to match a factor $x_i$
after being unprimed then the corresponding language
\begin{multline*}
\tilde{L}_p=\{\,x_1 \dollar^{|x_1|} x_2\dollar^{|x_2|} \cdots x_k\dollar^{|x_k|} \border y \mid 
k\geq 0, x_i \in \{a,b\}^*, 1\leq i\leq k,\\ \text{ no } x_i \text{ is proper prefix 
of } x_j, \text{ for } 1\leq j< i, \text{ and there exists } m \text{
  such that } x_m = h_1(y)\,\}
\end{multline*}
where
$h_1\colon \{a',b'\}^*\to \{a,b\}^*$ is defined as
$h_1(a')=a$, and $h_1(b')=b$, still belongs to $\rtf(\twsdnea)$.

We define the homomorphism 
$h_2\colon
\{\alpha_0,\alpha_1,\alpha_2,\alpha_3,a,b,\dollar,\cent,\border_{1},\border_{2}\}^*\to 
\{a,b,\dollar,\cent,\border_1,\border_2,a',b'\}^*$ as\newline
\mbox{$h_2(\alpha_0)=a'a'$,} $h_2(\alpha_1)=a'b'$, $h_2(\alpha_2)=b'a'$,
$h_2(\alpha_3)=b'b'$,
and $h_2(x)=x$, for $x\in \{a,b,\dollar,\cent,\border_1,\border_2\}$.

So, we have
$h_2^{-1}(\tilde{L}_p)=\tilde{L}_h$ which implies the non-closure under
inverse homomorphisms.
\end{proof}

Finally, we consider the reversal.

\begin{proposition}\label{prop:nonclosure-reversal}
The families $\rtf(\twsda)$ and $\rtf(\twsdnea)$ are not closed under
reversal.
\end{proposition}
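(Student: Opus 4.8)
The plan is to exhibit a single witness language $L$ that lies in $\rtf(\twsdnea)$ but whose reversal $L^R$ is not even in $\rtf(\twsda)$. Since every $\twsdnea$ is a $\twsda$ we have $\rtf(\twsdnea)\subseteq\rtf(\twsda)$, and since $(L^R)^R=L$, a single such $L$ refutes closure under reversal for both families at once. For the witness I would reuse the length-doubling matching idea behind $\tilde{L}_h$, but additionally pin down the block lengths. Concretely, with the homomorphism $h(\alpha_0)=aa$, $h(\alpha_1)=ab$, $h(\alpha_2)=ba$, $h(\alpha_3)=bb$ of Proposition~\ref{prop:nonclosure-invhom}, I set
$$
H=\{\,x_1\dollar^{|x_1|}x_2\dollar^{|x_2|}\cdots x_k\dollar^{|x_k|}\border y \mid k\geq 0,\ x_i\in\{a,b\}^*,\ |x_1|=\cdots=|x_k|,\ \exists\, m\colon x_m=h(y)\,\},
$$
take $L=H^R$, and then prove that (i) $H\notin\rtf(\twsda)$ and (ii) $H^R\in\rtf(\twsdnea)$.

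For part~(i) I would mirror the equivalence-class argument of Theorem~\ref{theo:lindet-notin-rttwsda}. Fixing $\ell\geq 1$, each subset $P\subseteq\{a,b\}^{2\ell}$ yields the prefix $w_P=v_1\dollar^{2\ell}v_2\dollar^{2\ell}\cdots v_{|P|}\dollar^{2\ell}\border$; because all factors share the common length $2\ell$, the word $w_P$ respects the equal-length constraint. For distinct $P,S$ an element $u\in P\setminus S$ gives the distinguishing suffix $\hat{u}=h^{-1}(u)$ of length $\ell$, since then $w_P\hat{u}\in H$ while $w_S\hat{u}\notin H$. Hence $E(H,\ell)\geq 2^{2^{2\ell}}$, which by Lemma~\ref{lem:rttwsca-equivalence-classes} exceeds the admissible bound $2^{p\cdot 2^{\ell}}$ once $2^{2\ell}>p\cdot 2^{\ell}$, so $H\notin\rtf(\twsda)$.

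For part~(ii) I would describe a real-time $\twsdnea$ accepting $H^R$, whose words read $y^R\border\dollar^{|x_k|}x_k^R\cdots\dollar^{|x_1|}x_1^R$. The machine first reads $y^R$, pushing one node per symbol, so the resulting path of length $n=|y|$ stores $y^R$ from the root downward; this is the decisive gain, because the query now arrives \emph{first} and can be recorded at the cheap rate of one node per input symbol. Each following block $\dollar^{|x_i|}x_i^R$ is then handled independently: the machine uses the $\dollar$-run to walk the tree pointer back up to the topmost node, and while reading $x_i^R$ it descends the path comparing two input symbols against the $h$-code stored at the current node, maintaining a flag that records whether some block has matched and that is set whenever the leaf is reached exactly as a block ends. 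The equal-length constraint forces every block to have the common length $2n=|h(y)|$, so reading at two symbols per level consumes a block in exactly $n$ descents, and the uniform padding $\dollar^{2n}$ is always at least the maximal depth $n$; therefore the pointer can be reset to the top between blocks regardless of where the previous comparison stopped. The tree is built once and thereafter only read, so the device is non-erasing and clearly real time, and it accepts precisely when the flag is set upon reading $\rightend$. Combining (i) and (ii) as above then yields the non-closure for both families.

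The main obstacle is exactly this inter-block repositioning. In the forward language the padding $\dollar^{|x_i|}$ immediately follows $x_i$ and cancels the descent performed while reading $x_i$; after reversal the padding of a block instead \emph{precedes} its content and would have to cancel the descent of the previous block, so with arbitrary block lengths a long block followed by a short one would strand the pointer deep in the tree. Imposing equal block lengths removes this difficulty while leaving the lower bound of part~(i) intact, because that bound only ever uses dictionaries all of whose entries share the length $2\ell$ (in particular the superfluous prefix-freeness condition can simply be dropped). I would therefore take special care to verify that the equal-length restriction preserves hardness, and that a block whose comparison fails midway is consumed without corrupting the subsequent reset to the top.
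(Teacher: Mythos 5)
Your argument is correct, but it takes a genuinely different route from the paper. The paper recycles its union witness: it takes $L=\hat{L}_p\cup \hat{L}_\subtext{mi}$, which is not in $\rtf(\twsda)$ by Proposition~\ref{prop:nonclosure-union}, and observes that $L^R$ is real-time acceptable because the first symbol of a reversed word already determines whether it can belong to $\hat{L}^R_\subtext{mi}$ or to $\hat{L}^R_p$, each of which lies in $\rtf(\twsdnea)$. You instead build a bespoke equal-length dictionary language $H$, prove $H\notin\rtf(\twsda)$ directly by the equivalence-class counting of Lemma~\ref{lem:rttwsca-equivalence-classes}, exactly in the style of Theorem~\ref{theo:lindet-notin-rttwsda} (and you are right that all the words $w_P$ already satisfy the equal-length constraint and that prefix-freeness can be dropped, since it was only ever needed for the trie construction on the forward language), and then construct a real-time $\twsdnea$ for $H^R$ that stores $y^R$ as a path and replays each block against it. Both proofs share the inevitable logical shape---one witness whose reversal is acceptable while the language itself is not, which settles both families at once since $\rtf(\twsdnea)\subseteq\rtf(\twsda)$---but yours is self-contained: it does not lean on the union non-closure, whose proof the paper only sketches, at the price of carrying out an explicit automaton construction.

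One detail you should tighten in part~(ii). Resetting at one level per $\dollar$ and waiting at the root only verifies that the padding is at least the current depth, whereas membership requires the padding to be exactly $\dollar^{|x_i|}$, and a finite control cannot count the surplus. The fix is the policy you already use for the content: ascend one level per \emph{two} $\dollar$ symbols, so that exactly $2n$ padding symbols carry the pointer from the bottom leaf at depth $n$ to the root, and reject on any deviation (padding ending early or running long, content shorter or longer than $2n$). This hardwired rejection is sound: a word in which some block or padding run has length different from $2n=|h(y)|$ is never in $H^R$, because either the block lengths are unequal, or they all equal some $L'\neq 2n$ and then no block can equal $h(y)$. With this policy every non-rejected block leaves the pointer at depth $n$ (on a mismatch you keep descending at the same rate without comparing), so the invariant you need between blocks holds, which is precisely the point you flagged as needing special care.
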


\begin{proof}
A witness for the non-closure under reversal is 
the language $L=\hat{L}_p\cup \hat{L}_\subtext{mi}$. By
Proposition~\ref{prop:nonclosure-union}, it is not accepted by
any real-time $\twsda$.

Concerning $L^R$, the first symbol of an input decides
to which language it still may belong. If the symbol is $\border_2$
the input may only belong to $\hat{L}^R_\subtext{mi}$. If it is from
$\{a,b,\border_2\}$ then the input may only belong to~$\hat{L}^R_p$.

The language $\hat{L}^R_\subtext{mi}$ is accepted by some real-time
deterministic pushdown automaton and, thus, by some real-time $\twsdnea$.
Furthermore, it is not hard to see that $\hat{L}^R_p$ belongs to
$\rtf(\twsdnea)$ as well. We conclude the non-closures under reversal.
\end{proof}

\begin{table}[!ht]
\begin{center}
\renewcommand{\arraystretch}{1.2}\setlength{\tabcolsep}{6pt}
\begin{tabular}{|c|c|c|c|c|c|c|c|c|c|}
\hline
  Family & $\overline{\phantom{aa}}$ & $\cup$ & $\cap$ & $\cap_\subtext{reg}$ & $\cdot$ & $*$ &
  $h_{\text{len.pres.}}$ & $h^{-1}$ & $R$\\
\hline\hline
$\rtf(\twsda)$   & \cyes & \cno & \cno & \cyes & \cno & \cno & \cno & \cno & \cno\\
$\rtf(\twsdnea)$   & \cyes & \cno & \cno & \cyes & \cno & \cno & \cno & \cno & \cno\\
$\dcfl$   & \cyes & \cno & \cno & \cyes & \cno & \cno & \cno & \cyes & \cno\\
\hline
\end{tabular}
\end{center}
\caption{Closure properties of the language families discussed. $\dcfl$
  denotes the family of deterministic context-free languages.}
\label{tab:closure}
\end{table}

\section{Future Work}

We made some first steps to investigate deterministic real-time 
tree-walking-storage automata. Several possible lines of future research 
may be tackled. First of all, it would be natural to consider the
\emph{nondeterministic} variants of the model.
Decision problems and their computational complexities are an untouched area.
Another question is how and to which extent the capacities and complexities 
are changing in case of a unary input alphabet and/or a unary set of tree
symbols (which lead to the notion of counters in the classical models).

\bibliographystyle{eptcs}

\end{document}